\numberwithin{equation}{section}
\newlength{\drop}
\definecolor{amethyst}{rgb}{0.6, 0.4, 0.8}
\definecolor{burgundy}{rgb}{0.5, 0.0, 0.13}
\newtheorem{theorem}{Theorem}[section]
\newtheorem{lemma}[theorem]{Lemma}
\title{\textbf{A scalable variational inequality 
approach for flow through porous media 
models with pressure-dependent viscosity}}
\author{\textbf{N.~K.~Mapakshi}, \textbf{J.~Chang} and \textbf{K.~B.~Nakshatrala} \\
{\small Department of Civil and Environmental Engineering, 
  University of Houston. \\
  \textbf{Correspondence to:}~\textsf{knakshatrala@uh.edu}}}
\date{\today}
\begin{document}

\begin{titlepage}
  \drop=0.1\textheight
  \centering
  \vspace*{\baselineskip}
  \rule{\textwidth}{1.6pt}\vspace*{-\baselineskip}\vspace*{2pt}
  \rule{\textwidth}{0.4pt}\\[\baselineskip]
  {\LARGE \textbf{\color{burgundy}
  A scalable variational inequality approach for  \\[0.2\baselineskip] 
  flow through porous media models with \\[0.2\baselineskip]
  pressure-dependent viscosity}}\\[0.3\baselineskip]
    \rule{\textwidth}{0.4pt}\vspace*{-\baselineskip}\vspace{3.2pt}
    \rule{\textwidth}{1.6pt}\\[0.5\baselineskip]
    \scshape
    An e-print of the paper will be made available on arXiv. \par
    \vspace*{0.5\baselineskip}
    Authored by \\[0.3\baselineskip]
    {\Large N.~K.~Mapakshi \par}
    {\itshape Graduate Student, University of Houston. \par}
    \vspace*{0.2\baselineskip}
    {\Large J.~Chang \par}
    {\itshape Postdoctoral Researcher, Rice University. \par}
    \vspace*{0.2\baselineskip}
    {\Large K.~B.~Nakshatrala\par}
    {\itshape Department of Civil \& Environmental Engineering \\
    University of Houston, Houston, Texas 77204--4003. \\ 
    \textbf{phone:} +1-713-743-4418, \textbf{e-mail:} knakshatrala@uh.edu \\
    \textbf{website:} http://www.cive.uh.edu/faculty/nakshatrala\par}    
    \vspace*{0.2\baselineskip}
   \begin{figure}[h]
   	\centering
   	\subfigure[RT0 formulation]{\includegraphics[scale=0.29]{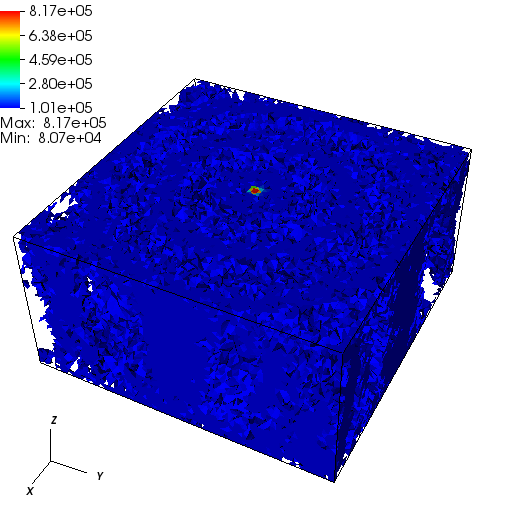}}
   	\subfigure[Proposed VI-based formulation]{\includegraphics[scale=0.29]{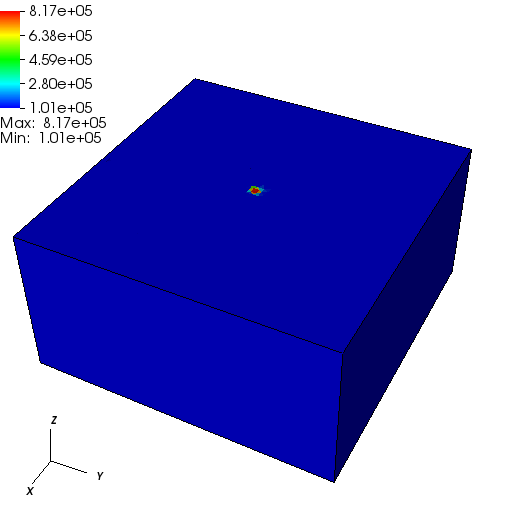}}
   	\captionsetup{format=hang}
	\vspace{-0.15in}
   	\caption*{\small{This picture shows the pressure profiles 
	of a 3D reservoir with a bore hole at the top surface. The 
	left figure depicts the pressure profile obtained using the 
	lowest-order Raviart-Thomas (RT0) formulation. The 
	missing chunks represent the regions in which the 
	discrete maximum principle (DMP) is violated. The 
	right figure shows the pressure profiles under the 
	proposed VI-based formulation, and there are no 
	violations of DMP.}}
   \end{figure}
    \vfill
    {\scshape 2017} \\
    {\small Computational \& Applied Mechanics Laboratory} \par
\end{titlepage}

\begin{abstract}
Mathematical models for flow through porous media typically 
enjoy the so-called maximum principles, which place bounds 
on the pressure field. 
It is highly desirable to preserve these bounds on the pressure field in predictive numerical simulations, that is, one needs to satisfy discrete maximum principles (DMP). Unfortunately, many of the existing formulations for flow through porous media models do \emph{not} satisfy DMP. 
  This paper presents a robust, scalable numerical formulation based on variational inequalities (VI), to model non-linear flows through heterogeneous, anisotropic porous media without violating DMP. VI is an optimization technique that places bounds on the numerical solutions of partial differential equations. To crystallize the ideas, a modification to Darcy equations by taking into account pressure-dependent viscosity will be discretized using the lowest-order Raviart-Thomas (RT0) and Variational Multi-scale (VMS) finite element formulations. It will be shown that these formulations violate DMP, and, in fact, these violations increase with an increase in anisotropy. It will be shown that the proposed VI-based formulation provides a viable route to enforce DMP. Moreover, it will be shown that the proposed formulation is scalable, and can work with any numerical discretization and weak form.
  A series of numerical benchmark problems are solved to demonstrate 
  the effects of heterogeneity, anisotropy and non-linearity on DMP 
  violations under the two chosen formulations (RT0 and VMS), and 
  that of non-linearity on solver convergence for the proposed 
  VI-based formulation.
  Parallel scalability on modern computational platforms will be 
  illustrated through strong-scaling studies, which will prove 
  the efficiency of the proposed formulation in a parallel setting. 
  Algorithmic scalability as the problem size is scaled
  up will be demonstrated through novel static-scaling 
  studies. The performed static-scaling studies can 
  serve as a guide for users to be able to select an 
  appropriate discretization for a given problem size.
\end{abstract}
\keywords{Variational inequalities; pressure-dependent viscosity; 
anisotropy; maximum principles; flow though porous media; 
parallel computing}
\maketitle


\section{INTRODUCTION}
\label{Sec:S1_NN_Intro}

The success of many current and emerging technological
endeavors critically depend on a firm understanding
and on the ability to control flows in heterogeneous,
anisotropic porous media. These endeavors include
geological carbon sequestration, geothermal systems,
oil recovery, water purification systems, extraction
of gas hydrates from tight shale; just to name a few.
Modeling and predictive simulations play an important
role in all these endeavors, and one has to overcome
many numerical challenges to obtain accurate numerical
solutions. It is beyond the scope of this paper to address
all the major issues associated with the flow of fluids
through porous media. Herein, we however address one of
the main numerical challenges that is encountered in
numerical modeling of flow through porous media with
relevance to the mentioned applications. 

Flow through porous media models typically enjoy the
so-called maximum principles, which place bounds on
the pressure field. These bounds depend on the prescribed
data, which include boundary conditions, anisotropy of the
porous media, body force,
volumetric source, topology of the domain, and the
regularity of the boundary. The non-negative constraint
on the pressure (which basically implies the physical
condition that a fluid subject to a flow in a porous medium 
cannot sustain a ``suction" by itself) can be shown to be
a special case of the classical maximum principle. It is
imperative that these bounds on the pressure field are
preserved in a predictive numerical simulation; that
is, one needs to satisfy maximum principles in the
discrete setting. The discrete version of maximum
principles is commonly referred to as discrete
maximum principles (DMP). It becomes even more
crucial for those flow models in which 
the material properties depend on the pressure; for example,
the case in which the viscosity of the fluid depends on the
pressure in the fluid, as a violation of DMP can amplify
errors in the solution fields. Unfortunately, many of the
commonly used mixed finite element formulations
for flow through porous media models do \emph{not} satisfy 
DMP, which will be shown in the subsequent sections. Moreover,
the problems pertaining to flow through porous media,
especially the ones encountered in subsurface modeling,
are highly nonlinear and large-scale in nature. 
Thus, one needs to develop numerical formulations that are 
scalable in an algorithmic and parallel sense in 
addition to satisfying DMP. 

\emph{This paper presents a new, scalable numerical
  formulation based on variational inequalities (VI)
  that enforces discrete maximum principles for
  nonlinear flow through porous media models by taking into
  account heterogeneity, anisotropic permeability
  and pressure-dependent viscosity.}

\subsection{A review of related prior works}
In order to bring out clearly the contributions
made in this paper and the approach taken by us,
we provide a brief discussion on prior works with
respect to three aspects.

\subsubsection{Pressure-dependent viscosity.}
The classical Darcy model \citep{Darcy_1856},
which is the most popular flow through porous
media model, assumes the viscosity of the fluid 
to be a constant, and in particular, the model
assumes that the coefficient of viscosity is
independent of the pressure in the fluid
\citep{nakshatrala2011numerical}.
But there is abundant experimental evidence
that the viscosity of liquids, especially
organic liquids, depends on the pressure
\citep{Bridgman}. More importantly, the
dependence of viscosity on pressure for
organic liquids is exponential
\citep{CarlBarus}.
Since then several studies have developed mathematical
models that take into account the dependence of viscosity
on pressure, and established the existence of solutions
for the resulting governing equations \citep{malek2002global, 
  hron2003numerical, franta2005steady, bulivcek2007navier}. 
A work that is relevant to this paper is by 
\citep{nakshatrala2011numerical} who derived
a modification to the Darcy model using the
mixture theory by taking into account the
pressure-dependent viscosity. They have
also developed a stabilized formulation
for the resulting equations using the
variational multiscale paradigm
\citep{hughes1995multiscale}, and have shown,
using numerical simulations, that the dependence
of viscosity on pressure has a significant effect
on both qualitative and quantitative nature of the
solution fields.
Later, \citep{nakshatrala2013picard} have presented a
stabilized mixed formulation based on Picard linearization
and laid down the differences in the predictions for
enhanced oil recovery and carbon sequestration when
the pressure dependence on viscosity is considered
against when not considered.
Recently, \citep{Chang_JPM_2016} have extended the
pressure dependence to the Darcy-Forchheimer model,
and demonstrated how the dependence of the drag 
coefficient on pressure differs significantly
from when it depends on velocity. \emph{However,
  all of these studies considered isotropic
  permeability, and did not address the violations
  of maximum principles and the non-negative
  constraint on the pressure field.}

\subsubsection{Anisotropy, violations of DMP and numerical
  techniques to enforce DMP}
\citep{varga1966discrete} was the first to
address DMP, and the study was restricted
to the finite difference method applied on
the Poisson's equation.
\citep{ciarlet1973maximum} were the first
to address DMP in the context of the finite
element method. Their study revealed that
the single-field Galerkin formulation for
solving the Poisson's equation, in general,
does not satisfy DMP. They also obtained
sufficient conditions which are in the
form of restrictions on the mesh (e.g.,
all the angles of a triangular element to
be acute) to meet DMP. Subsequent studies
have found that these mesh restrictions,
which have been derived for isotropic
diffusion equations, are not sufficient
when one considers anisotropic diffusion
equations or other processes like advection
and reactions; for example, see
\citep{2015_Mudunuru_Nakshatrala_arXiv}
and references therein. 

In the last decade, several approaches have
been developed to enforce DMP on general
computational grids for anisotropic
diffusion-type equations under the finite
element method. Some of the notable
approaches are based on either constrained
optimization techniques \citep{Liska_CiCP_2008,
  nagarajan2011enforcing,mudunuru2015local},
placing anisotropic metric-based restrictions
on the mesh \citep{huang2015discrete,
  2015_Mudunuru_Nakshatrala_arXiv}, or
altering the formulations at the continuum
setting \citep{pal2016adaptive}.
Placing restrictions on meshes is not
a viable approach for applications
involving flow through porous media, as
the computational domains are complex
and it is not practical or even possible
to generate metric-based meshes that
satisfy DMP. The approach of altering
formulations at the continuum setting
is not a viable route either for porous
media applications, as one has to deal
with a hierarchy of models with multiple
constituents in such applications and there
is no trivial way of altering formulations
at the continuum level to meet DMP. 

Optimization-based methods based on quadratic
programming have been successfully employed
to develop formulations for anisotropic
diffusion equations, for example, see
\citep{NAKSHATRALA20096726}. The key behind
these methods has been to construct an objective
function in quadratic form; employ low-order
finite elements, whose shape functions are
non-negative within each element; and enforce
the constraints arising from DMP as explicit
bound constraints on the nodal quantities.
This approach has also been extended to transient
problems \citep{nakshatrala_nagarajan_shabouei_2016}, 
advection-diffusion equation \citep{mudunuru2016enforcing}, 
diffusion with fast reactions \citep{NAKSHATRALA2013278}, 
and parallel environments \citep{Chang_JOSC_2017}.
However, it needs to be emphasized that
this approach, which is based on quadratic
programming, requires the bilinear form in 
the weak form to be symmetric, which is not
the case with many porous media models and
weak formulations. 
\emph{More importantly, all the mentioned
  studies considered linear equations
  arising from transport problems.}  

\subsubsection{Variational inequality based techniques.}
Variational inequalities arise quite naturally in
various branches of mechanics \citep{kikuchi1988contact,
  hlavacek2012solution,rodrigues1987obstacle,han2012plasticity}.
In fact, the whole field of variational inequalities grew
from a problem in mechanics which was posed by
\citep{signorini1933sopra}. This problem, which is
popularly referred to as the Signorini problem, is
about finding static equilibrium configurations
of a linear elastic body resting on a rigid smooth
surface \citep{signorini1959questioni}.
A work that is more directly related to this paper
is by \citep{chipot2012variational}, who has employed
variational inequalities to address some class of
problems that arise in studies on flow through porous
media, specifically the dam problem, and established 
mathematical properties like existence and uniqueness
of solutions. 
The Signorini problem and the treatment of the dam problem
are examples of infinite-dimensional variational inequalities.
Subsequently the field of finite-dimensional variational
inequalities has been developed \citep{Facchinei_FDVI_2003},
and this field has eventually found its way into the mainstream
numerical optimization \citep{Kinderlehrer_VI_2000,ulbrich2011semismooth}.
Recently, finite-dimensional variational inequalities have been
utilized to enforce DMP and the non-negativity of concentrations
under advection-diffusion equations \citep{chang2017variational}. 
This formulation does not require the bilinear form of 
the underlying weak formulation to be symmetric. 
However, it needs to be emphasized that advection-diffusion
equations, which arise in transport problems, are linear
and are in terms of a single unknown field. In this paper,
we address flow problems, and the governing equations are
in terms of two fields (i.e., the velocity and pressure)
and are nonlinear. 

\subsection{Our approach, salient features and an outline of the paper}
We develop a numerical formulation based on
variational inequalities for modeling flow
through porous media that accounts for anisotropy
and pressure-dependent viscosity, and possess the
following attractive features:
\begin{enumerate}[(i)]
\item The proposed framework can handle any
  numerical discretization and weak formulation.
\item The devised computational framework is
  equipped to handle non-linear formulations
  and problems with non-self-adjoint operators.
\item Maximum principles are satisfied under this
  framework even when anisotropy is present.
\item A computer implementation of the
  proposed framework can seamlessly
  leverage on the state-of-the-art
  software and algorithms that are 
  currently available for high
  performance computing. 
\item The implementation outlined in
  this paper has excellent scalability
  both in the algorithmic and parallel
  sense.
\end{enumerate}
All the aforementioned features of the
proposed formulation will be illustrated
in the subsequent sections. 
The rest of this paper is organized as follows. In 
Section \ref{Sec:S2_NN_GE} we present the governing equations for 
the modified Darcy flow and describe discrete maximum principles. 
In Section \ref{Sec:S3_NN_Mixed} we provide various mixed and 
nonlinear formulations used for this study. In Section \ref{Sec:S4_NN_VI}, 
we lay down the solver methodology and outline of the computer implementation 
for the proposed VI based framework. In Section \ref{Sec:S5_NN_NR}, we present 
numerical results illustrating effectiveness and scalability of the 
proposed framework. Concluding remarks are made in 
Section \ref{Sec:S6_NN_CR}.

\section{A NON-LINEAR MODEL FOR FLOW THROUGH POROUS MEDIA}
\label{Sec:S2_NN_GE}
Consider a porous domain denoted by
$\Omega \subset \mathbb{R}^{nd}$, where
``$nd$'' denotes the number of spatial
dimensions. The boundary
of the domain will be denoted by $\partial \Omega :=
\overline{\Omega} - \Omega$, where an overline
denotes the set closure. A spatial point will be 
denoted by $\mathbf{x} \in \overline{\Omega}$.
The divergence and gradient operators with
respect to $\mathbf{x}$ are, respectively,
denoted by $\mathrm{div}[\cdot]$ and
$\mathrm{grad}[\cdot]$. The unit outward normal
to the boundary is denoted by $\widehat{\mathbf{n}}$.
The discharge velocity vector field and
the pressure scalar field are denoted by
$\mathbf{u}$ and $p$, respectively.
The boundary is divided into two parts:~$\Gamma^{u}$
and $\Gamma^{p}$, such that we have 
\begin{align}
  \Gamma^{u} \cup \Gamma^{p} = \partial\Omega
  \quad \mathrm{and} \quad
  \Gamma^{u} \cap \Gamma^{p} = \emptyset.
\end{align}
$\Gamma^{u}$ denotes that part of the boundary on
which the normal component of the velocity is
prescribed. $\Gamma^\mathit{p}$ is that part of
the boundary on which pressure is prescribed.
The permeability of the porous medium
will be denoted by $\mathbf{K}$, which
is a second-order tensor. It is assumed
that the permeability tensor is positive
definite and symmetric. The density of
the fluid is denoted by $\rho$. The
coefficient of (dynamic) viscosity of
the fluid is denoted by $\mu$.

As mentioned earlier, for most organic
liquids, the dependence of pressure on
viscosity is exponential. That is,
mathematically we have 
\begin{align}
  \label{eq:barus_exp}
  \mu(p) = \mu_0 \exp[\beta_{B} p], 
\end{align}
where $\beta_B$ is the Barus coefficient,
which has to be obtained experimentally
and the values of this coefficient for
various liquids can be found in
\citep{Bridgman}. Since the pressures that 
we deal in this paper are relatively small 
but not sufficiently small enough to neglect the 
pressure dependence of viscosity, we take a two-term 
Taylor expansion of the Barus formula (given
by equation \eqref{eq:barus_exp}). The two term Taylor
expansion, which will be employed in all the numerical
experiments in this paper, takes the following mathematical
form: 
\begin{align}
  \label{eq:barus_lin}
  \mu(p) = \mu_{0} \left(1 + \beta_{B}p \right). 
\end{align}

For convenience, we introduce the drag
coefficient, which is defined as follows:
\begin{alignat}{1}
  \alpha(p) = \mu(p) \mathbf{K}^{-1}.
\end{alignat}
Since the viscosity depends on the pressure and
the permeability explicitly depends on the spatial
coordinates, the drag coefficient will explicitly
depend on both the pressure and $\mathbf{x}$. 

The governing equations for flow through
porous media by taking into account the
pressure-dependent viscosity can be
written as follows:
\begin{subequations}\label{eq:modified_darcy}
  \begin{alignat}{2}
    \label{eq:modified_darcy_LM}
    & \alpha(p) \mathbf{u}
    + \mathrm{grad}[p]
    = \rho \mathbf{b}
    && \quad \mathrm{in} \quad \Omega, \\
    \label{eq:modified_darcy_BoM}
    &\mathrm{div}[\mathbf{u}]= f
    && \quad \mathrm{in} \quad \Omega, \\
    &\mathbf{u} \cdot \widehat{\mathbf{n}}
    = u_{n}
    && \quad \mathrm{on} \quad \Gamma^u, \; \mathrm{and}\\
    \label{eq:modified_darcy_pBC}
    &p = p_{0}
    && \quad \mathrm{on} \quad \Gamma^p, 
  \end{alignat}
\end{subequations}
where $u_{n}$ denotes the prescribed normal
component of velocity on the boundary, $p_{0}$
denotes the prescribed pressure on the boundary,
$\mathbf{b}$ is the specific body force, and $f$
is the prescribed volumetric source, all of which
are functions of $\mathbf{x}$.
It should be noted that one can recover
the classical Darcy equations by setting
$\beta_{B} = 0$, which makes the drag
coefficient to be independent of the
pressure. 
A systematic derivation of the above
governing equations under the theory
of interacting continua can be found
in \citep{nakshatrala2011numerical}.

\subsection{A mathematical interlude}
For a mathematical treatment of the abstract
boundary value problem \eqref{eq:modified_darcy},
we assume to have pressure boundary conditions on
the entire boundary (i.e., $\Gamma^{p} = \partial
\Omega$). We also rewrite the governing equations
solely in terms of the pressure as follows:
\begin{subequations}
  \begin{alignat}{2}
    \label{Eqn:VI_single_field_GE}
    -&\mathrm{div}\left[\frac{1}{\alpha(p)}
      (\mathrm{grad}[p] - \rho \mathbf{b}) 
      \right] = f
    &&\quad \mathrm{in} \; \Omega \; \mathrm{and} \\
    \label{Eqn:VI_single_field_BC}
    &p = p_{0} 
    &&\quad \mathrm{on} \; \partial \Omega. 
  \end{alignat}
\end{subequations}
Note that equation \eqref{Eqn:VI_single_field_GE} is
obtained by combining equations \eqref{eq:modified_darcy_LM}
and \eqref{eq:modified_darcy_BoM}. 
Equation \eqref{Eqn:VI_single_field_GE} is
a special case of a second-order quasi-linear
elliptic partial differential equation. 

A general second-order quasi-linear elliptic
operator takes the following form:
\begin{align}
  \label{Eqn:VI_general_quasilinear}
  Q[u] = \mathrm{div}[\mathbf{A}(\mathbf{x},u,\mathrm{grad}[u])]
  + B(\mathbf{x},u,\mathrm{grad}[u]). 
\end{align}
We define the coefficient matrix as follows:
\begin{align}
  \mathcal{A}(\mathbf{x},u,\mathbf{h})
  = \mathrm{sym}\left[\frac{\partial
      \mathbf{A}(\mathbf{x},u,\mathbf{h})}{\partial \mathbf{h}}\right],
\end{align}
which, in indicial notation, takes the following form:
\begin{align}
  \mathcal{A}_{ij} = \frac{1}{2} \left(
  \frac{\partial A_{i}}{\partial h_{j}}
  + \frac{\partial A_{j}}{\partial h_{i}}\right).
\end{align}
Note that the entries of the coefficient matrix
need not be constants. 
The operator $Q[u]$ is said to be elliptic
in $\Omega$ if the coefficient matrix
$\mathcal{A}(\mathbf{x},u,\mathbf{h})$
is positive definite for all $\mathbf{x}
\in \Omega$, $u \in \mathbb{R}$ and
$\mathbf{h} \in \mathbb{R}^{nd}$. 
From the theory of partial differential
equations, this operator is known to
satisfy the following important property:
\begin{theorem}{(Comparison principle in a general setting)}
  \label{Thm:VI_CP}
  Let $u, v \in C^{1}(\overline{\Omega})$ which
  satisfy $Q[u] \geq 0$ and $Q[v] \leq 0$ in
  $\Omega$ and $u \leq v$ on $\partial \Omega$.
  If the following conditions are met:
  \begin{enumerate}[(i)]
  \item $\mathbf{A}(\mathbf{x},u,\mathbf{h})$ and
    $B(\mathbf{x},u,\mathbf{h})$ are continuously
    differentiable with respect to $u$ and
    $\mathbf{h}$, 
  \item $Q[u]$ is elliptic in $\Omega$, 
  \item $B(\mathbf{x},u,\mathbf{h})$ is
    non-increasing with respect to $u$
    for fixed $(\mathbf{x},\mathbf{h})$, and
  \item $\partial B(\mathbf{x},u,\mathbf{h})
    /\partial \mathbf{h} = \mathbf{0}$ 
  \end{enumerate}
  then we
  have $u \leq v$ in $\Omega$. 
\end{theorem}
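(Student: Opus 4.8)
The plan is to prove the statement by forming the difference $w := u - v$ and showing that it obeys a \emph{linear} elliptic differential inequality to which a classical weak maximum principle applies; the conclusion $u \leq v$ then follows at once because $w \leq 0$ on $\partial\Omega$. This is the standard route for comparison principles of quasi-linear operators, and the four hypotheses are precisely what is needed to carry it out.

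First I would subtract the two inequalities to get $Q[u] - Q[v] \geq 0$ in $\Omega$. Using hypothesis (i), that $\mathbf{A}$ and $B$ are continuously differentiable in $(u,\mathbf{h})$, I would express each difference as the integral of its derivative along the straight segment joining $(v,\mathrm{grad}[v])$ to $(u,\mathrm{grad}[u])$. By the fundamental theorem of calculus and the chain rule,
\begin{align}
  \mathbf{A}(\mathbf{x},u,\mathrm{grad}[u]) - \mathbf{A}(\mathbf{x},v,\mathrm{grad}[v])
  &= \overline{\mathbf{B}}(\mathbf{x})\,\mathrm{grad}[w] + \overline{\mathbf{a}}(\mathbf{x})\,w, \\
  B(\mathbf{x},u,\mathrm{grad}[u]) - B(\mathbf{x},v,\mathrm{grad}[v])
  &= \overline{\mathbf{c}}(\mathbf{x}) \cdot \mathrm{grad}[w] + \overline{b}(\mathbf{x})\,w,
\end{align}
where the coefficients are the averages $\overline{\mathbf{B}} = \int_0^1 (\partial \mathbf{A}/\partial \mathbf{h})\,dt$, $\overline{\mathbf{a}} = \int_0^1 (\partial \mathbf{A}/\partial u)\,dt$, $\overline{\mathbf{c}} = \int_0^1 (\partial B/\partial \mathbf{h})\,dt$ and $\overline{b} = \int_0^1 (\partial B/\partial u)\,dt$, each integrand evaluated at the interpolated state $(\mathbf{x}, v+tw, \mathrm{grad}[v]+t\,\mathrm{grad}[w])$. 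Hypothesis (iv) forces $\overline{\mathbf{c}} = \mathbf{0}$, so the source contributes only a zeroth-order term. Collecting terms, the difference of the inequalities reads $L[w] := \mathrm{div}[\overline{\mathbf{B}}\,\mathrm{grad}[w] + \overline{\mathbf{a}}\,w] + \overline{b}\,w \geq 0$, a linear second-order operator in divergence form whose coefficients are continuous on $\overline{\Omega}$ (since $u,v \in C^{1}(\overline{\Omega})$ and $\mathbf{A},B$ are $C^{1}$).

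Next I would verify the two sign conditions that render $L$ amenable to the maximum principle. The symmetric part of the principal coefficient is $\mathrm{sym}[\overline{\mathbf{B}}] = \int_0^1 \mathcal{A}(\mathbf{x}, v+tw, \mathrm{grad}[v]+t\,\mathrm{grad}[w])\,dt$, which is positive definite because $\mathcal{A}$ is positive definite by the ellipticity hypothesis (ii); boundedness of $u,v$ and their gradients on the compact set $\overline{\Omega}$ then upgrades this to uniform ellipticity of $L$. Hypothesis (iii), that $B$ is non-increasing in $u$, gives $\partial B/\partial u \leq 0$ and hence $\overline{b} \leq 0$, so the zeroth-order coefficient of $L$ is non-positive.

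Finally, with $L$ uniformly elliptic, $\overline{b} \leq 0$ and $L[w] \geq 0$ in $\Omega$, I would invoke the weak maximum principle for linear elliptic operators to conclude $\sup_{\Omega} w \leq \sup_{\partial\Omega} \max\{w,0\}$; since $u \leq v$ on $\partial\Omega$ gives $w \leq 0$ there, the right-hand side vanishes, forcing $w \leq 0$, i.e.\ $u \leq v$, throughout $\Omega$. The main obstacle is the low regularity: because $u,v$ are only assumed $C^{1}$, the operator $L$ carries merely continuous (not differentiable) coefficients, so the reduction and the conclusion must be read in the weak/divergence-form sense and the maximum principle invoked in the form valid for bounded measurable coefficients rather than the classical pointwise version; establishing \emph{uniform} (as opposed to pointwise) ellipticity of the averaged coefficient matrix is the accompanying technical point.
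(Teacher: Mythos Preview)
Your proposal is correct and follows precisely the standard linearization route (subtract, apply Hadamard's lemma along the segment joining the two states, then invoke the weak maximum principle for the resulting linear divergence-form operator) that underlies the cited result. The paper does not supply its own argument for this theorem; it simply refers the reader to \citep[Theorem 10.7]{Gilbarg_Trudinger}, whose proof is exactly the one you have sketched.
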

\begin{proof}
  A mathematical proof can be found in
  \citep[Theorem 10.7]{Gilbarg_Trudinger}.
\end{proof}

We now show that the solutions of the porous media
model satisfy the non-negative constraint and the
maximum principle. To this end, we assume that
the body force is conservative. That is, there
exists a scalar field $\psi$ such that
\begin{align}
  \rho \mathbf{b} = -\mathrm{grad}[\psi].
\end{align}

\begin{lemma}
  \label{Lemma:VI_properties}
  The porous media model, given by equations  
  \eqref{Eqn:VI_single_field_GE}, along with
  the pressure dependent viscosity given
  by equation \eqref{eq:barus_exp}, can be put into
  the following form:
  \begin{align}
    Q[p] + f = 0 \quad \mathrm{in} \; \Omega, 
  \end{align}
  with the following properties: 
  \begin{enumerate}[(i)]
  \item $\mathbf{A}(\mathbf{x},u,\mathbf{h})$
    and $B(\mathbf{x},u,\mathbf{h})$ are
    continuously differentiable with respect
    to $u$ and $\mathbf{h}$, 
  \item $Q[u]$ is elliptic in $\Omega$, 
  \item $B(\mathbf{x},u,\mathbf{h})$ is
    non-increasing with respect to $u$, and 
  \item $\partial B(\mathbf{x},u,\mathbf{h})/
    \partial \mathbf{h} = \mathbf{0}$. 
  \end{enumerate}
\end{lemma}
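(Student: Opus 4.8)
The plan is to recast the single-field equation \eqref{Eqn:VI_single_field_GE} into the canonical quasi-linear form \eqref{Eqn:VI_general_quasilinear} by a direct algebraic manipulation, and then to verify the four properties essentially by inspection. First I would use that $\alpha(p) = \mu(p)\mathbf{K}^{-1}$ with $\mathbf{K}$ invertible, so that $\alpha(p)^{-1} = \mu(p)^{-1}\mathbf{K}$. Substituting the conservative body force $\rho\mathbf{b} = -\mathrm{grad}[\psi]$ into \eqref{Eqn:VI_single_field_GE} merges the two gradient terms and yields
\begin{align}
  \mathrm{div}\left[\frac{1}{\mu(p)}\mathbf{K}\,\mathrm{grad}[p+\psi]\right] + f = 0.
\end{align}
Matching this against \eqref{Eqn:VI_general_quasilinear}, I would read off the flux and source as
\begin{align}
  \mathbf{A}(\mathbf{x},u,\mathbf{h}) = \frac{1}{\mu(u)}\mathbf{K}(\mathbf{x})\big(\mathbf{h} + \mathrm{grad}[\psi](\mathbf{x})\big)
  \quad\mathrm{and}\quad
  B(\mathbf{x},u,\mathbf{h}) = 0,
\end{align}
so that the equation takes the advertised form $Q[p] + f = 0$.

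With this identification, properties (iii) and (iv) are immediate, since $B \equiv 0$ is trivially non-increasing in $u$ and has vanishing derivative with respect to $\mathbf{h}$. For property (i), I would note that under the exponential Barus law \eqref{eq:barus_exp} one has $1/\mu(u) = \mu_0^{-1}\exp[-\beta_B u]$, which is smooth in $u$, while $\mathbf{A}$ is affine in $\mathbf{h}$; hence $\mathbf{A}$ is continuously differentiable in both arguments, and so is the identically zero $B$.

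The only property requiring a short computation is ellipticity (ii). I would compute $\partial \mathbf{A}/\partial \mathbf{h} = \mu(u)^{-1}\mathbf{K}$, which is independent of $\mathbf{h}$, and since $\mathbf{K}$ is symmetric the symmetrization in the definition of $\mathcal{A}$ leaves it unchanged, giving $\mathcal{A}(\mathbf{x},u,\mathbf{h}) = \mu(u)^{-1}\mathbf{K}$. Positive definiteness then follows from the standing assumption that $\mathbf{K}$ is positive definite together with $\mu(u)^{-1} > 0$, where the latter is guaranteed for all $u \in \mathbb{R}$ by $\mu_0 > 0$ and the positivity of the exponential. The main point to be careful about, and the place where the hypotheses do real work, is that the body-force contribution enters $\mathbf{A}$ but drops out of $\partial\mathbf{A}/\partial\mathbf{h}$; this is exactly what the conservative-body-force assumption secures, and it is also why one uses the exponential form \eqref{eq:barus_exp} rather than its linear truncation, since the exponential keeps $\mu(u)$ strictly positive without any restriction on the sign or magnitude of $u$.
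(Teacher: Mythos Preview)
Your proof is correct and mirrors the paper's own argument almost verbatim: the same identification of $\mathbf{A}$ and $B\equiv 0$, the same computation $\mathcal{A}=\mu(u)^{-1}\mathbf{K}$, and the same appeal to positive definiteness of $\mathbf{K}$ together with positivity of the exponential. One small aside on your closing commentary: the conservative-body-force assumption is not what makes $\rho\mathbf{b}$ drop out of $\partial\mathbf{A}/\partial\mathbf{h}$ (that happens simply because $\rho\mathbf{b}$ depends only on $\mathbf{x}$); the assumption earns its keep in the subsequent comparison-principle theorems, not in this lemma.
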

\begin{proof}
  It is a straightforward computation to
  show that with the following choices:
  \begin{subequations}
    \begin{align}
      \mathbf{A}(\mathbf{x},u,\mathrm{grad}[u])
      &= \frac{1}{\alpha(u)}\left(\mathrm{grad}[u]
      - \rho \mathbf{b}\right)
      = \frac{1}{\alpha(u)}\mathrm{grad}[u + \psi] \; \mathrm{and} \\
      B(\mathbf{x},u,\mathrm{grad}[u]) & = 0, 
    \end{align}
  \end{subequations}
  equation \eqref{Eqn:VI_single_field_GE}
  can be written as
  \begin{align}
    Q[p] + f = 0 \quad \mathrm{in} \; \Omega. 
  \end{align}
  $B(\mathbf{x},u,\mathbf{h}) = 0$ implies
  that conditions (iii), (iv) and the second
  part of (i) are trivially satisfied. Using
  equation \eqref{eq:barus_exp} we have
  \begin{align}
    \mathbf{A}(\mathbf{x},u,\mathbf{h})
    = \frac{1}{\mu_0} \mathbf{K} \exp[-\beta_{B}
      u] \left(\mathbf{h} + \mathrm{grad}[\psi]\right).
  \end{align}
  Clearly, $\mathbf{A}(\mathbf{x},u,\mathbf{h})$
  is continuously differentiable with respect to
  $u$ and $\mathbf{h}$, which implies that the
  first part of condition (i) is satisfied. The
  coefficient matrix can be written as follows:
  \begin{align}
    \mathcal{A}(\mathbf{x},u,\mathbf{h})
    = \frac{1}{\mu_0} \mathbf{K}
    \exp[-\beta_{B} u].
  \end{align}
  The positive definiteness of the permeability
  tensor, $\mathbf{K}$, and $\mu_0 > 0$ imply
  that the coefficient matrix is positive
  definite for all $\mathbf{x} \in \Omega$,
  $u \in \mathbb{R}$ and $\mathbf{h} \in
  \mathbb{R}^{nd}$; which further implies that
  condition (ii) is met. 
\end{proof}

\begin{theorem}{(Non-negative pressures under the porous media model)}
  \label{Thm:VI_non-negative}
  Let $\psi$ be sufficiently smooth and $p \in C^{1}(\overline{\Omega})$. 
  If the prescribed volumetric source is non-negative
  in $\Omega$ and the prescribed pressure on the
  boundary is non-negative then the pressure in
  the entire domain is non-negative. That is, if
  $f \leq 0$ in $\Omega$ and $p_0 \geq 0$ on
  $\partial \Omega$ then $p \geq 0$ in $\Omega$. 
\end{theorem}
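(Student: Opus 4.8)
The plan is to read the non-negativity statement off the comparison principle (Theorem~\ref{Thm:VI_CP}), using the structure already verified in Lemma~\ref{Lemma:VI_properties}. That lemma recasts the model as $Q[p] + f = 0$ in $\Omega$ and checks all four hypotheses of Theorem~\ref{Thm:VI_CP}: $\mathbf{A}$ is continuously differentiable in $u$ and $\mathbf{h}$, $Q$ is elliptic, and $B \equiv 0$, so that monotonicity in $u$ and $\partial B/\partial \mathbf{h} = \mathbf{0}$ hold trivially. Consequently the comparison principle applies to $Q$, and the whole proof reduces to choosing a suitable pair of comparison functions and verifying the two sub-/super-solution inequalities together with the ordering on $\partial\Omega$.

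Concretely, I would rewrite the problem as $Q[p] = -f$ in $\Omega$ with $p = p_0$ on $\partial\Omega$ and take the pair $v := p$ and $u := 0$. Since the prescribed source is a net injection (non-negative), $Q[v] = Q[p] = -f \le 0$, so the physical pressure $p$ is the super-solution. I would then check that the constant $0$ is a sub-solution, i.e. $Q[0] \ge 0$, and observe that the boundary ordering $u = 0 \le p_0 = v$ on $\partial\Omega$ is precisely the hypothesis $p_0 \ge 0$. Theorem~\ref{Thm:VI_CP} then delivers $u \le v$, that is $0 \le p$, throughout $\Omega$, which is the assertion. (Conceptually this is just the weak maximum principle: a super-solution of an elliptic operator attains its minimum on the boundary.)

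The step I expect to be the main obstacle is verifying that the constant zero state is genuinely a sub-solution, since this is where the conservative body force intrudes. Evaluating the flux from Lemma~\ref{Lemma:VI_properties} at $u = 0$ and $\mathbf{h} = \mathbf{0}$ gives $\mathbf{A}(\mathbf{x},0,\mathbf{0}) = \tfrac{1}{\mu_0}\mathbf{K}\,\mathrm{grad}[\psi]$, whence $Q[0] = \tfrac{1}{\mu_0}\mathrm{div}[\mathbf{K}\,\mathrm{grad}[\psi]]$. This quantity vanishes when the body force is uniform and the medium homogeneous (so that $0$ is trivially a sub-solution), but for the heterogeneous permeabilities emphasized in this paper it need not be signed, and the clean argument requires the standing hypothesis $\mathrm{div}[\mathbf{K}\,\mathrm{grad}[\psi]] \ge 0$ on the body-force potential. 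A second, purely bookkeeping point is the sign of the source: tracing $\mathrm{div}[\mathbf{u}] = f$ through $\mathbf{u} = -\mathbf{A}$ yields $Q[p] = -f$, so it is the non-negativity of $f$ that makes $p$ the super-solution; with the opposite sign one would instead be comparing $p$ to $0$ from below, which cannot produce $p \ge 0$ from $p_0 \ge 0$.
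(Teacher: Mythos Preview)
Your plan correctly identifies the overall structure---invoke Theorem~\ref{Thm:VI_CP} via Lemma~\ref{Lemma:VI_properties}---but it does not close, and you have already located the reason. With the pair $(u,v)=(0,p)$ you need $Q[0]\ge 0$, yet $Q[0]=\tfrac{1}{\mu_0}\,\mathrm{div}[\mathbf{K}\,\mathrm{grad}[\psi]]$ carries no sign for general $\mathbf{K}$ and $\psi$. Your proposed fix---impose $\mathrm{div}[\mathbf{K}\,\mathrm{grad}[\psi]]\ge 0$ as a standing hypothesis---would prove a strictly weaker statement; the theorem carries no such assumption.

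The idea you are missing is to absorb the body force by shifting both comparison functions by $-\psi$. The paper chooses $u=-p-\psi$ and $v=-\psi$. The point of $v=-\psi$ is that the flux in Lemma~\ref{Lemma:VI_properties} vanishes there identically,
\[
\mathbf{A}\bigl(\mathbf{x},-\psi,\mathrm{grad}[-\psi]\bigr)
=\frac{1}{\alpha(-\psi)}\bigl(\mathrm{grad}[-\psi]+\mathrm{grad}[\psi]\bigr)=0,
\]
so $Q[-\psi]=0$ for \emph{every} permeability $\mathbf{K}$ and potential $\psi$; this is what replaces your unattainable $Q[0]\ge 0$. The boundary ordering $u\le v$ on $\partial\Omega$ is then literally $p_0\ge 0$, and the conclusion $u\le v$ in $\Omega$ unwinds to $-p-\psi\le-\psi$, i.e.\ $p\ge 0$. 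Your remark on the sign of $f$ is well taken (the statement mixes the phrase ``non-negative'' with the inequality $f\le 0$), but whichever convention is fixed, the substantive gap in your argument is the $\psi$-shift that turns the barrier into an exact solution of $Q=0$ rather than the constant $0$.
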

\begin{proof}
  Choose $u = -p -\psi$ and $v = - \psi$. We then have
  \begin{align}
    Q[u] = - f \geq 0 \quad \mathrm{and} \quad Q[v] = 0. 
  \end{align}
  If $p_0 \geq 0$ on $\partial \Omega$ we have
  \begin{align}
    u = -p -\psi = -p_0 -\psi \leq -\psi = v
    \quad \mathrm{on} \; \partial \Omega. 
  \end{align}
  Using Lemma \ref{Lemma:VI_properties} and the comparison principle given by
  Theorem \ref{Thm:VI_CP}, we conclude that
  \begin{align}
    u \leq v \quad \mathrm{in} \; \Omega. 
  \end{align}
  This further implies that
  \begin{align}
    0 = u + \psi \leq v + \psi = p
    \quad \mathrm{in} \; \Omega, 
  \end{align}
  which completes the proof.
\end{proof}

\begin{theorem}{(Maximum principle for the porous media model)}
  \label{Thm:VI_maximum_principle}
  If the prescribed volumetric source is zero
  (i.e., $f = 0$) in $\Omega$ then the maximum
  and minimum pressures occur on the boundary.
  That is,
  \begin{align}
    \min\left[p_{0}\right]
    \leq p(\mathbf{x}) \leq
    \max\left[p_{0}\right]
    \quad \forall \mathbf{x}
    \in \overline{\Omega}. 
  \end{align}
\end{theorem}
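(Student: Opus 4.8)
The plan is to derive the two-sided bound by invoking the comparison principle (Theorem~\ref{Thm:VI_CP}) twice, once to obtain the upper bound $p \le \max[p_0]$ and once for the lower bound $p \ge \min[p_0]$, reusing the reformulation and the transformation already exploited in the proof of Theorem~\ref{Thm:VI_non-negative}. By Lemma~\ref{Lemma:VI_properties} the operator $Q$ with $B \equiv 0$ and $\mathbf{A}(\mathbf{x},u,\mathbf{h}) = \frac{1}{\mu_0}\mathbf{K}\exp[-\beta_{B}u](\mathbf{h}+\mathrm{grad}[\psi])$ satisfies the four structural hypotheses required by Theorem~\ref{Thm:VI_CP}, so the comparison principle is at our disposal. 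Since here $f=0$, the pressure obeys $Q[p]=0$, and, exactly as in the proof of Theorem~\ref{Thm:VI_non-negative}, the transformed field satisfies $Q[-p-\psi] = -f = 0$.

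The first thing I would record is a family of barrier functions. I claim that every function of the form $c-\psi$, with $c\in\mathbb{R}$ a constant, solves the homogeneous problem: indeed $\mathrm{grad}[c-\psi] + \mathrm{grad}[\psi] = \mathbf{0}$, hence $\mathbf{A}(\mathbf{x},c-\psi,\mathrm{grad}[c-\psi]) = \mathbf{0}$ and therefore $Q[c-\psi]=0$. These $\psi$-shifted constants are the natural objects to compare $p$ against, since the nonlinear viscosity forbids translating $p$ by a constant directly.

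For the lower bound I would set $m:=\min[p_0]$ and take $u=-p-\psi$ and $v=-m-\psi$. Then $Q[u]=-f=0\ge 0$ and $Q[v]=0\le 0$ by the barrier observation, while on $\partial\Omega$ we have $u=-p_0-\psi \le -m-\psi = v$ because $p_0 \ge m$; Theorem~\ref{Thm:VI_CP} then gives $u\le v$ in $\Omega$, i.e.\ $-p-\psi \le -m-\psi$, which is $p \ge \min[p_0]$. For the upper bound I would swap the roles: setting $M:=\max[p_0]$, take $u=-M-\psi$ and $v=-p-\psi$, so that $Q[u]=0\ge 0$ and $Q[v]=-f=0\le 0$, while on $\partial\Omega$ we have $u=-M-\psi \le -p_0-\psi = v$ because $p_0\le M$. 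The comparison principle now yields $-M-\psi \le -p-\psi$ in $\Omega$, that is $p\le \max[p_0]$. Combining the two inequalities gives the asserted bound.

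The boundary-ordering verifications and the sign bookkeeping are routine. The one point that genuinely requires care—and which I expect to be the main obstacle—is identifying the correct barriers: because the pressure-dependent viscosity destroys invariance under translation of $p$, bare constants $c$ are \emph{not} solutions of $Q[\cdot]=0$, and one must instead use the $\psi$-shifted constants $c-\psi$ that are annihilated by $Q$. This is precisely the same device that makes the substitution $u=-p-\psi$ work in Theorem~\ref{Thm:VI_non-negative}, now applied twice with the shift constant chosen as $-m$ and $-M$ respectively.
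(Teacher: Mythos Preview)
Your proposal is correct and essentially identical to the paper's proof: both arguments apply the comparison principle (Theorem~\ref{Thm:VI_CP}) twice using the $\psi$-shifted constants $-\psi - \max[p_0]$ and $-\psi - \min[p_0]$ as barriers against the transformed field $-\psi - p$, with Lemma~\ref{Lemma:VI_properties} supplying the structural hypotheses. The only difference is cosmetic---you treat the lower bound first whereas the paper starts with the upper bound---and your explicit discussion of why the bare constants fail as barriers (loss of translation invariance due to pressure-dependent viscosity) is a nice clarification that the paper leaves implicit.
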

\begin{proof}
  To show the right inequality, we take 
  \begin{align}
    u = -\psi -\max[p_0] \quad \mathrm{and} \quad
    v = -\psi - p . 
  \end{align}
  Since $p \leq \max[p_0]$ on $\partial \Omega$,
  we have $u \leq v$ on $\partial \Omega$.
  Moreover, the above choices for $u$
  and $v$ imply that 
  \begin{align}
    Q[u] = 0
    \quad \mathrm{and} \quad
    Q[v] = Q[p] = f = 0
    \quad \mathrm{in} \; \Omega. 
  \end{align}
  Lemma \ref{Lemma:VI_properties} and the
  comparison principle given by Theorem
  \ref{Thm:VI_CP} imply that 
  \begin{align}
    u \leq v \quad \mathrm{in} \; \Omega. 
  \end{align}
  This further implies that 
  \begin{align}
    -\psi - \max[p_0] = u \leq v = -\psi - p
    \quad \mathrm{in} \; \Omega. 
  \end{align}
  One can thus conclude that $p \leq \max[p_0]$.

  To show the left inequality, we take
  \begin{align}
    u = -\psi -p \quad \mathrm{and} \quad
    v = -\psi -\min[p_0], 
  \end{align}
  which imply that
  \begin{align}
    &u \leq v \quad \mathrm{on} \; \partial \Omega, \\
    &Q[u] \geq 0 \; \mathrm{and} \;
    Q[v] \leq 0 \quad \mathrm{in} \; \Omega. 
  \end{align}
  By again appealing to the comparison principle,
  we conclude that $\min[p_0] \leq p$ in $\Omega$. 
\end{proof}
The existing numerical discretizations for flow
through porous media models do not produce
solutions that satisfy the aforementioned
mathematical properties for anisotropic porous domains.
Thus, the central aim of this paper is to develop
a computational framework for nonlinear models
for flow through porous media that satisfies
the maximum principle and ensures non-negative
solutions for the pressure. This will be achieved
by combining mixed finite element methods and
variational inequalities.

\section{MIXED FORMULATIONS}
\label{Sec:S3_NN_Mixed}
In our study, we employ two well-established
finite element formulations which achieve
discrete stability differently. The stability
of a mixed formulation in the discrete setting
will be primarily dictated by the famous
Ladyzhenskaya-Bab{\v u}ska-Brezzi stability
condition \citep{babuvska1973finite,brezzi1991mixed}.
The first formulation is the classical mixed 
formulation (also known as the Galerkin weak 
formulation) but the interpolations for the 
velocity and pressure fields are based on 
the lowest-order Raviart-Thomas space
\citep{Raviart1977}. 
It is well-known that an arbitrary combination
of interpolation functions for the velocity
and pressure fields under the classical
mixed formulation need not satisfy the
LBB condition, and hence may not be stable
\citep{brezzi1991mixed,brezzi2008mixed}.
The Raviart-Thomas spaces place restrictions
on the interpolations for the velocity and
pressures fields to \emph{satisfy} the LBB
condition, and thus making the classical
mixed formulation stable.
The second formulation is the Variational
Multi-scale formulation \citep{nakshatrala2011numerical},
which is a stabilized mixed formulation that
augments the Galerkin weak formulation
with stabilization terms to \emph{circumvent}
the LBB condition. A nice discussion on the
two classes of mixed formulation, which differ
in the way they handle the LBB condition
(i.e., satisfying vs. circumventing),
can be found in \citep{franca1988two}.

The weak forms of the aforementioned two formulations
will form the basis for the associated variational
inequalities. To this end, the following function 
spaces will be employed in the rest of the paper:
\begin{subequations}
  \begin{align}
   \mathcal{U} &:= \left\{ \mathbf{u} \in \left(L_2(\Omega)\right)^{nd}
    \; | \; \mathrm{div}[\mathbf{u}] \in L_2(\Omega), \; 
    \mathbf{u} \cdot \widehat{\mathbf{n}} = u_{n} \; \mathrm{on} \;
    \Gamma^u  \right\}, \\
    \mathcal{W} &:= \left\{ \mathbf{w} \in \left(L_2(\Omega)\right)^{nd}
    \; | \; \mathrm{div}[\mathbf{w}] \in L_2(\Omega), \; 
    \mathbf{w} \cdot \widehat{\mathbf{n}} = 0 \; \mathrm{on} \;
    \Gamma^u  \right\}, \\
    \mathcal{P} &:= L_2(\Omega), \; \mathrm{and} \\ 
    \mathcal{Q} &:= H^1(\Omega),
  \end{align}
\end{subequations}
where $H^{1}(\Omega)$ is a standard Sobolov space 
\citep{brezzi1991mixed} and $L_2(\Omega)$ is set 
of all square integrable functions on $\Omega$. 
The Galerkin weak formulation for the governing 
equations \eqref{eq:modified_darcy} reads: Find 
$\mathbf{u} \in \mathcal{U}$ and $p \in \mathcal{P}$ 
such that we have
\begin{align}
  \label{eq: Gal}
  &\int_\Omega  \alpha(p) \mathbf{u} \cdot \mathbf{w} 
  \;\mathrm{d}\Omega - \int_\Omega \mathrm{div}[\mathbf{w}] 
  \; p \;\mathrm{d}\Omega  - \int_\Omega \mathrm{div}[\mathbf{u}]
  \; q\;\mathrm{d}\Omega \nonumber \\
  &\qquad= \int_{\Omega} \rho\mathbf{b} \cdot \mathbf{w} \; 
  \mathrm{d}\Omega -\int_{\Omega} f q\; \mathrm{d}\Omega 
  - \int_{\Gamma^\mathrm{P}} p_0 (\mathbf{w}\cdot \widehat{\mathbf{n}}) 
  \; \mathrm{d}\Gamma 
  \quad \forall \mathbf{w} \in \mathcal{W}, q \in \mathcal{P}.
\end{align}

\subsection{Lowest-order Raviart-Thomas space}
Given a simplex $\mathcal{T} \in \mathbb{R}^{nd}$, 
the local Raviart-Thomas space of order $k \geq 0$ 
is defined as follows \citep{Raviart1977,bergamaschi1994mixed}:
\begin{align}
  \mathcal{RT}_k\mathcal{(T)} 
  = \left(\mathcal{P}_k\mathcal{(T)}\right)^{nd} 
  + \mathbf{x}\mathcal{P}_k\mathcal{(T)} 
\end{align}
where $ \mathcal{P}_k$ is the space of polynomials 
of degree $k$ and $nd$, as mentioned before, is the 
number of spatial dimensions. 
It is well-known that the interpolations for 
the velocity and pressure fields under the 
Raviart-Thomas spaces of all orders satisfy 
the LBB \emph{inf-sup} stability condition 
and thereby provide stable numerical solutions 
under the Galerkin weak formulation 
\citep{brezzi1991mixed}.

In this study we employ the lowest-order Raviart-Thomas 
space for interpolation of velocity and pressure fields, 
which is the simplest and the most popular space among 
the class of Raviart-Thomas spaces.
Under the lowest-order Raviart-Thomas space, the 
pressure is constant within an element and the 
fluxes are evaluated at the midpoint of each 
edge in 2D or at the barycenter of each face 
in 3D. Mathematically, 
\begin{align}
  \mathcal{RT}_0\mathcal{(T)} = \left(\mathcal{P}_0\mathcal{(T)}\right)^{nd} 
  + \mathbf{x}\mathcal{P}_0\mathcal{(T)}.
\end{align}
The finite dimensional subspaces $\mathcal{U}^h 
\subset \mathcal{U}$ and $\mathcal{P}^h \subset 
\mathcal{P}$ under $\mathcal{RT}_0$ for a triangle 
are defined as follows: 
\begin{subequations}
  \begin{align}
    \mathcal{U}^h &:= \{\mathbf{u}=(u^{(1)},u^{(2)})\;|\;u^{(1)}_K = a_K + b_K x,\; u^{(2)}_K 
    = c_K + b_K y;\; a_K,b_K,c_K \in \mathbb{R}\} \; \mathrm{and}  \\ 
    \mathcal{P}^h &:= \left\{p\;|\;p\;=\;\mbox{a constant on each triangle }\; K \in \mathcal{T}_h  \right\}, 
  \end{align}
\end{subequations}
where $ \mathcal{T}_h $ is a triangulation on 
$\Omega $. These subspaces for tetrahedra are 
defined as follows:
\begin{subequations}
  \begin{align}
    \mathcal{U}^h &:= \{ \mathbf{u}=(u^{(1)},u^{(2)},u^{(3)})\; | \nonumber \\
    & u^{(1)}_i = a_i + b_i x,\; u^{(2)}_i 
    = c_i + b_i y,\;
    u^{(3)}_i = d_i + b_i z; \; a_i,b_i,c_i,d_i \in \mathbb{R}\} \; \mathrm{and}  \\ 
    \mathcal{P}^h &:= \left\{p\;|\;p\;=\;\mbox{constant on each tetrahedron}\; K \in \mathcal{T}_h  \right\}. 
	\end{align}
\end{subequations}
where $ \mathcal{T}_h $, in this case, is a 
tetrahedralization (i.e., 3D triangulation) 
on $\Omega $. 

\subsection{Variational Multi-scale formulation}
Variational Multi-scale (VMS) is a computational paradigm 
to achieve enhanced stability of a given weak formulation 
\citep{hughes1995multiscale}. For a mixed formulation, 
say the Galerkin formulation, residual-based adjoint-type 
stabilization terms are added to circumvent the LBB 
condition and achieve stability. 
\citep{nakshatrala2011numerical} have successfully 
employed the VMS paradigm to develop a stabilized 
mixed formulation for the isotropic version of 
the porous media model outlined in Section 
\ref{Sec:S2_NN_GE}. 
The weak form under the VMS formulation 
for governing equations 
\eqref{eq:modified_darcy_LM}--\eqref{eq:modified_darcy_pBC} 
reads: Find $\mathbf{u} \in \mathcal{U}$ and $p \in \mathcal{P}$ 
such that we have 
\begin{align}\label{eq: VMS}\nonumber
  \int_\Omega  \mathbf{\alpha}(p) \mathbf{u} \cdot \mathbf{w}  & \;\mathrm{d}\Omega 
  - \int_\Omega p \; \mathrm{div}[\mathbf{w}] \;\mathrm{d} \Omega 
  - \int_\Omega q \; \mathrm{div}[\mathbf{u}] \;\mathrm{d} \Omega \\ \nonumber
  &\underbrace{-\frac{1}{2}\int_\Omega (\mathbf{u} + \mathbf{\alpha}^{-1}(p) \mathrm{grad}[p])\cdot (\mathbf{\alpha}(p) \mathbf{w} + \mathrm{grad}[q]) \;\mathrm{d}\Omega} _{\mbox{stabilization term}}  \\ \nonumber
  = \int_{\Omega} \rho \mathbf{b} \cdot \mathbf{w}  \;\mathrm{d}\Omega  
  &-\int_{\Omega} q \; f \;\mathrm{d} \Omega 
  - \int_{\Gamma^P} p_0 (\mathbf{w}\cdot \widehat{\mathbf{n}}) \; \mathrm{d}\Gamma \\ 
  & \underbrace{-\frac{1}{2}\int_{\Omega}\mathbf{\alpha}^{-1}(p) \rho \mathbf{b}\cdot (\mathbf{\alpha}(p)\mathbf{w}+\mathrm{grad}[q]) \; \mathrm{d}\Omega}_{\mbox{stabilization term}} 
\quad \forall \mathbf{w} \in \mathcal{W}, q \in \mathcal{Q}. 
\end{align}
In all our numerical simulations, we employ 
equal-order linear nodal-based interpolations 
for the pressure and velocity fields.  

\subsection{Non-linear formulations}
The pressure dependence of viscosity in the weak
formulations turns the problem into a non-linear
problem. To solve such problems, we introduce the
canonical form:~Find $\mathbf{u} \in \mathcal{U}$
and $p \in \mathcal{Q}$ such that we have 
\begin{align}
  \label{Eqn:S3_RT0_canonical}
  &\mathcal{F}\left[(\mathbf{u},p);(\mathbf{w},q)\right] 
  = 0 \quad \forall \mathbf{w} \in \mathcal{W}, \;
  \forall q \in \mathcal{Q},
\end{align}
where $\mathcal{F}$ is the residual expressed in 
semi-linear form; the arguments to the left and 
right of the semicolon are non-linear and linear, 
respectively. The semi-linear form for the RT0 
formulation takes the following form:
\begin{align}
  \label{Eqn:S3_RT0_residual}
  &\mathcal{F}_{\mathrm{RT0}}\left[(\mathbf{u},p);(\mathbf{w},q)\right] := 
    \int_{\Omega} \alpha(p)\mathbf{u}\cdot\mathbf{w}\;\mathrm{d} \Omega 
    - \int_{\Omega} p\cdot\mathrm{div}[\mathbf{w}]\;\mathrm{d}\Omega
    - \int_{\Omega} \mathrm{div}\left[\mathbf{u}\right]\cdot q\;\mathrm{d}\Omega  \nonumber \\
    &\qquad+ \int_{\Gamma^{\mathrm{P}}} p_0 \cdot \left(\mathbf{w}\cdot\mathbf{\hat{n}}\right)
     \; \mathrm{d} \Gamma + \int_{\Omega}q\cdot f\;\mathrm{d}\Omega 
     - \int_{\Omega}\rho\mathbf{b}\cdot\mathbf{w}\;\mathrm{d} \Omega.
\end{align}
The semi-linear form for the VMS formulation
can be written as follows: 
\begin{align}
  \label{Eqn:S3_VMS_residual}
  &\mathcal{F}_{\mathrm{VMS}}\left[(\mathbf{u},p);(\mathbf{w},q)\right] := 
  \frac{1}{2}\int_{\Omega} \alpha(p)\mathbf{u}\cdot\mathbf{w}\;\mathrm{d} \Omega 
    - \int_{\Omega} p \; \mathrm{div}[\mathbf{w}]\;\mathrm{d}\Omega
    - \int_{\Omega} \mathrm{div}\left[\mathbf{u}\right]\cdot q\;\mathrm{d}\Omega \nonumber\\
    &\qquad- \frac{1}{2}\int_{\Omega}\mathbf{u}\cdot\mathrm{grad}[q]\;\mathrm{d}\Omega 
    - \frac{1}{2}\int_{\Omega}\mathrm{grad}[p]\cdot\mathbf{w}\;\mathrm{d}\Omega 
    + \int_{\Gamma^{\mathrm{P}}} p_0 \left(\mathbf{w}\cdot\mathbf{\hat{n}}\right)
     \; \mathrm{d} \Gamma \nonumber\\
     &\qquad- \frac{1}{2}\int_{\Omega}\alpha^{-1}(p)\left(\mathrm{grad}[p]-\rho\mathbf{b}\right)\cdot\mathrm{grad}[q]\;\mathrm{d}\Omega
     - \frac{1}{2}\int_{\Omega}\rho\mathbf{b}\cdot\mathbf{w}\;\mathrm{d} \Omega\nonumber\\
    &\qquad+ \int_{\Omega}q\cdot f\;\mathrm{d}\Omega. 
\end{align}
Newton's method is employed to solve the
non-linear variational forms. Let the
superscript ($i$) denote the current Newton 
or non-linear iteration. The Jacobian $\mathcal{J}\left[\left(\mathbf{u}^{(i)},p^{(i)}\right);
\left(\delta\mathbf{u},\delta p\right),\left(\mathbf{w},q\right)\right]$
is computed by taking the G$\mathrm{\hat{a}}$teaux variation of
the residual $\mathcal{F}\left[\left(\mathbf{u},p\right),
\left(\mathbf{w},q\right)\right]$ at $\mathbf{u} = \mathbf{u}
^{(i)}$ and $p = p^{(i)}$ in the directions of
$\delta\mathbf{u}$ and $\delta p$ respectively. Formally, this is derived by
computing:
\begin{align}
\label{Eqn:NL_gateaux}
&\mathcal{J}\left[\left(\mathbf{u}^{(i)},p^{(i)}\right);\left(\delta\mathbf{u},\delta p\right),\left(\mathbf{w},q\right)\right] := \nonumber\\
&\qquad \mathop{\mathrm{lim}}_{\epsilon\rightarrow 0}
\frac{\mathcal{F}\left[\left(\mathbf{u}^{(i)}+\epsilon\delta\mathbf{u},p^{(i)}+\epsilon\delta p\right);\left(\mathbf{w},q\right)\right] -
\mathcal{F}\left[\left(\mathbf{u}^{(i)},p^{(i)}\right);\left(\mathbf{w},q\right)\right]}{\epsilon} \nonumber\\
&\qquad\equiv
\left[\frac{d}{d\epsilon}\mathcal{F}\left[\left(\mathbf{u}^{(i)}
+ \epsilon\delta\mathbf{u},p^{(i)}
+ \epsilon\delta p\right);\left(\mathbf{w},q\right)\right]\right]_{\epsilon=0},
\end{align}
provided the limit exists. For further details
on the G\^{a}teaux variation see
\citep{Spivak, Holzapfel, Glowinski}. Following through
with the calculation above yields the following Jacobian
under the RT0 formulation:
\begin{align}
\label{Eqn:S3_Jacobian_RT0}
&\mathcal{J}_{\mathrm{RT0}}\left[\left(\mathbf{u}^{(i)},p^{(i)}\right);\left(\delta\mathbf{u},\delta p\right),\left(\mathbf{w},q\right)\right] :=  \int_{\Omega} \alpha(p^{(i)})\delta\mathbf{u}\cdot\mathbf{w}\;\mathrm{d} \Omega 
+ \int_{\Omega} \frac{\partial \alpha(p^{(i)})}{\partial p}\mathbf{u}^{(i)}\delta p\cdot\mathbf{w}\;\mathrm{d} \Omega \nonumber\\
&\qquad- \int_{\Omega} \delta p\cdot\mathrm{div}[\mathbf{w}]\;\mathrm{d}\Omega 
- \int_{\Omega} \mathrm{div}\left[\delta\mathbf{u}\right]\cdot q\;\mathrm{d}\Omega. 
\end{align}
Likewise, the Jacobian for the VMS formulation reads:
\begin{align}
\label{Eqn:S3_Jacobian_VMS}
&\mathcal{J}_{\mathrm{VMS}}\left[\left(\mathbf{u}^{(i)},p^{(i)}\right);\left(\delta\mathbf{u},\delta p\right),\left(\mathbf{w},q\right)\right] := 
\frac{1}{2}\int_{\Omega} \alpha(p^{(i)})\delta\mathbf{u}\cdot\mathbf{w}\;\mathrm{d} \Omega 
+ \frac{1}{2}\int_{\Omega} \frac{\partial \alpha(p^{(i)})}{\partial p}\mathbf{u}^{(i)}\delta p\cdot\mathbf{w}\;\mathrm{d} \Omega \nonumber\\
&\qquad- \int_{\Omega} \delta p\cdot\mathrm{div}[\mathbf{w}]\;\mathrm{d}\Omega 
- \int_{\Omega} \mathrm{div}\left[\delta\mathbf{u}\right]\cdot q\;\mathrm{d}\Omega 
- \frac{1}{2}\int_{\Omega}\delta\mathbf{u}\cdot\mathrm{grad}[q]\;\mathrm{d}\Omega 
- \frac{1}{2}\int_{\Omega}\mathrm{grad}[\delta p]\cdot\mathbf{w}\;\mathrm{d}\Omega \nonumber \\
&\qquad- \frac{1}{2}\int_{\Omega}\alpha^{-2}(p^{(i)})\frac{\partial \alpha(p^{(i)})}{\partial p}\left(\mathrm{grad}\left[p^{(i)}\right]-\rho\mathbf{b}\right)\delta p\cdot\mathrm{grad}[q]\;\mathrm{d}\Omega\nonumber\\
&\qquad- \frac{1}{2}\int_{\Omega}\alpha^{-1}(p^{(i)})\mathrm{grad}[\delta p]\cdot\mathrm{grad}[q]\;\mathrm{d}\Omega. 
\end{align}
In each Newton iteration, we thus solve the
following linear variational problem:~Find
$\delta\mathbf{u} \in \mathcal{U}$ and
$\delta p \in \mathcal{Q}$ such that we
have
\begin{align}
\label{Eqn:NL_VMS}
&\mathcal{J}\left[\left(\mathbf{u}^{(i)},p^{(i)}\right);\left(\delta\mathbf{u},\delta p\right),\left(\mathbf{w},q\right)\right] = -\mathcal{F}\left[\left(\mathbf{u}^{(i)},p^{(i)}\right);(\mathbf{w},q)\right] = 0 \quad \forall \mathbf{w} \in \mathcal{W}, \;
\forall q \in \mathcal{Q}. 
\end{align}
We obtain the solution in an iterative fashion 
using the following update equation until the 
residual meets the prescribed tolerance:
\begin{subequations}
\label{Eqn:NL_update}
\begin{align}
&\mathbf{u}^{(i+1)} = \mathbf{u}^{(i)} + \delta\mathbf{u} \; \mathrm{and}\\
&p^{(i+1)} = p^{(i)} + \delta p.
\end{align}
\end{subequations}

\subsection{VI formulation in continuous setting}
In order to enforce the bound constraints due to 
maximum principles and the non-negative constraint 
we pose the problem as a variational inequality. 
To this end, we define the feasible solution 
space to be as follow: 
\begin{align}
  \mathcal{K} = \{(\mathbf{u},p) \in \mathcal{U}\times\mathcal{Q} \; 
  \vert \; p_{\mathrm{min}} \leq p \leq p_{\mathrm{max}}\}
\end{align}
In a specific problem, if there is no restriction on 
the lower bound of the pressure then one can set 
$p_{\mathrm{min}} = -\infty$. Similarly, one can set 
$p_{\mathrm{max}} = +\infty$ if there is no upper 
bound on the pressure.  
The proposed variational inequality in the 
continuous setting reads:~Find $(\mathbf{u},p) 
\in \mathcal{K}$ such that we have 
\begin{align}
  \mathcal{F}[\mathbf{u},p;\mathbf{w} 
    - \mathbf{u}, q - p] \geq 0 \qquad 
  \forall (\mathbf{w},q) \in \mathcal{K}
\end{align}

\section{PROPOSED COMPUTATIONAL FRAMEWORK}
\label{Sec:S4_NN_VI}
\subsection{Solver methodology}
Our proposed computational framework based on the 
two non-linear finite element variational formulations result 
in saddle-point problems, which are notoriously difficult to solve in a 
large-scale setting. Several classes of iterative solvers and preconditioning 
strategies exist for these types of problems \citep{benzi2005numerical,elman2006finite,murphy2000note}. 
One could alternatively employ hybridization techniques \citep{cockburn2009unified} 
which introduces Lagrange multipliers which can also significantly 
reduce the difficulty of solving such problems. 
However, in this study, we employ a Schur complement approach 
to precondition the saddle-point system. 

The residual vector $\boldsymbol{F}$ for
the RT0 formulation can be written as:
\begin{subequations}
  \begin{align}
    \label{Eqn:S4_residual_RT0}
    \boldsymbol{F}_{u} &:= \int_{\Omega} \alpha(p)
    \mathbf{u}\cdot\mathbf{w}\;\mathrm{d} \Omega 
    - \int_{\Omega} p\cdot\mathrm{div}[\mathbf{w}]\;\mathrm{d}\Omega
    + \int_{\Gamma^{\mathrm{P}}} p_0 \cdot \left(\mathbf{w}\cdot\mathbf{\hat{n}}\right)\; \mathrm{d} \Gamma\nonumber \\
     &\qquad- \int_{\Omega}\rho\mathbf{b}\cdot\mathbf{w}\;\mathrm{d} \Omega \; \mathrm{and} \\
\boldsymbol{F}_{p} &:= \int_{\Omega}q\cdot f\;\mathrm{d}\Omega 
    - \int_{\Omega}\mathrm{div}\left[\mathbf{u}\right]\cdot q\;\mathrm{d}\Omega, 
\end{align}
\end{subequations}
where the subscripts $u$ and $p$ denote
the velocity and pressure components
respectively. Likewise, the residual
vector for the VMS formulation is
written as follows:
\begin{subequations}
  \begin{align}
    \label{Eqn:S4_residual_VMS}
    \boldsymbol{F}_{u} &:= \frac{1}{2}\int_{\Omega} \alpha(p)\mathbf{u}\cdot\mathbf{w}\;\mathrm{d} \Omega 
    - \int_{\Omega} p\cdot\mathrm{div}[\mathbf{w}]\;\mathrm{d}\Omega
    + \int_{\Gamma^{\mathrm{P}}} p_0\cdot \left(\mathbf{w}\cdot\mathbf{\hat{n}}\right)\; \mathrm{d} \Gamma\nonumber\\
     &\qquad- \frac{1}{2}\int_{\Omega}\rho\mathbf{b}\cdot\mathbf{w}\;\mathrm{d} \Omega 
     -\frac{1}{2}\int_{\Omega}\mathbf{w}\cdot\mathrm{grad}[p]\;\mathrm{d}\Omega \; \mathrm{and} \\
\boldsymbol{F}_{p} &:= \int_{\Omega}q\cdot f\;\mathrm{d}\Omega 
    - \int_{\Omega}\mathrm{div}\left[\mathbf{u}\right]\cdot q\;\mathrm{d}\Omega
    - \frac{1}{2}\int_{\Omega}\mathrm{grad}[q]\cdot\mathbf{u}\;\mathrm{d}\Omega \nonumber\\
    &\qquad- \frac{1}{2}\int_{\Omega}\alpha^{-1}\mathrm{grad}[q]\cdot\left(\mathrm{grad}[p] 
    - \rho\mathbf{b}\right)\mathrm{d}\Omega.
\end{align}
\end{subequations}
The components of the Jacobian matrices for equations \eqref{Eqn:S4_residual_RT0} and
\eqref{Eqn:S4_residual_VMS}, respectively, can be subdivided as follows:
\begin{subequations}
\label{Eqn:S4_jacobian_RT0}
\begin{align}
\boldsymbol{J}_{uu} &:= \int_{\Omega} \alpha(p^{(i)})\delta\mathbf{u}\cdot\mathbf{w}\;\mathrm{d} \Omega, \\
\boldsymbol{J}_{up} &:= \int_{\Omega} \frac{\partial \alpha(p^{(i)})}{\partial p}\mathbf{u}^{(i)}\delta p\cdot\mathbf{w}\;\mathrm{d} \Omega - \int_{\Omega} \delta p\cdot\mathrm{div}[\mathbf{w}]\;\mathrm{d}\Omega, \\
\boldsymbol{J}_{pu} &:= -\int_{\Omega} \mathrm{div}\left[\delta\mathbf{u}\right]\cdot q\;\mathrm{d}\Omega, \; \mathrm{and} \\
\boldsymbol{J}_{pp} &:= \boldsymbol{0},
\end{align}
\end{subequations}
and
\begin{subequations}
\label{Eqn:S4_jacobian_VMS}
\begin{align}
\boldsymbol{J}_{uu} &:= \frac{1}{2}\int_{\Omega} \alpha(p^{(i)})\delta\mathbf{u}\cdot\mathbf{w}\;\mathrm{d} \Omega, \\
\boldsymbol{J}_{up} &:= \frac{1}{2}\int_{\Omega} \frac{\partial \alpha(p^{(i)})}{\partial p}\mathbf{u}^{(i)}\delta p\cdot\mathbf{w}\;\mathrm{d} \Omega - \int_{\Omega} \delta p\cdot\mathrm{div}[\mathbf{w}]\;\mathrm{d}\Omega - \frac{1}{2}\int_{\Omega}\mathrm{grad}[\delta p]\cdot\mathbf{w}\;\mathrm{d}\Omega, \\
\boldsymbol{J}_{pu} &:= -\int_{\Omega} \mathrm{div}\left[\delta\mathbf{u}\right]\cdot q\;\mathrm{d}\Omega 
- \frac{1}{2}\int_{\Omega}\delta\mathbf{u}\cdot\mathrm{grad}[q]\;\mathrm{d}\Omega, \; \mathrm{and} \\
\boldsymbol{J}_{pp} &:= - \frac{1}{2}\int_{\Omega}\alpha^{-2}(p^{(i)})\frac{\partial \alpha(p^{(i)})}{\partial p}\left(\mathrm{grad}\left[p^{(i)}\right]-\rho\mathbf{b}\right)\delta p\cdot\mathrm{grad}[q]\;\mathrm{d}\Omega\nonumber\\
&\qquad- \frac{1}{2}\int_{\Omega}\alpha^{-1}(p^{(i)})\mathrm{grad}[\delta p]\cdot\mathrm{grad}[q]\;\mathrm{d}\Omega.
\end{align}
\end{subequations}
Conceptually, the problem 
at hand is a 2$\times$2 block 
matrix:
\begin{align}
\boldsymbol{J} = \begin{pmatrix}
\boldsymbol{J}_{uu} & \boldsymbol{J}_{up}\\
\boldsymbol{J}_{pu} & \boldsymbol{J}_{pp}
\end{pmatrix},
\end{align}
which admits a full factorization of
\begin{align}
\boldsymbol{J} = \begin{pmatrix}
\boldsymbol{I} & \boldsymbol{0}\\
\boldsymbol{J}_{pu}\boldsymbol{J}_{uu}^{-1} & \boldsymbol{I}
\end{pmatrix}
\begin{pmatrix}
\boldsymbol{J}_{uu} & \boldsymbol{0}\\
\boldsymbol{0} & \boldsymbol{S}
\end{pmatrix}
\begin{pmatrix}
\boldsymbol{I} & \boldsymbol{J}_{uu}^{-1}\boldsymbol{J}_{up}\\
\boldsymbol{0} & \boldsymbol{I}
\end{pmatrix},
\end{align}
where $\boldsymbol{I}$ is the identity matrix and
\begin{align}
\boldsymbol{S}=\boldsymbol{J}_{pp}-\boldsymbol{J}_{pu}\boldsymbol{J}_{uu}^{-1}\boldsymbol{J}_{up},
\end{align}
is the Schur complement. The inverse can therefore be written as
\begin{align}
\boldsymbol{J}^{-1} = \begin{pmatrix}
\boldsymbol{I} & -\boldsymbol{J}_{uu}^{-1}\boldsymbol{J}_{up}\\
\boldsymbol{0} & \boldsymbol{I}
\end{pmatrix}
\begin{pmatrix}
\boldsymbol{J}_{uu}^{-1} & \boldsymbol{0}\\
\boldsymbol{0} & \boldsymbol{S}^{-1}
\end{pmatrix}
\begin{pmatrix}
\boldsymbol{I} & \boldsymbol{0}\\
-\boldsymbol{J}_{pu}\boldsymbol{J}_{uu}^{-1} & \boldsymbol{I}
\end{pmatrix}.
\end{align}
The task at hand is to approximate $\boldsymbol{J}_{uu}^{-1}$ and
$\boldsymbol{S}^{-1}$. Since $\boldsymbol{J}_{uu}$ is a mass matrix
for the Darcy equation, we can invert it using the
ILU(0) (incomplete lower upper) solver. We employ
a diagonal mass-lumping of $\boldsymbol{J}_{uu}$ to 
estimate $\boldsymbol{J}_{uu}^{-1}$. That is,
\begin{align}
\boldsymbol{S}_p = \boldsymbol{J}_{pp}-\boldsymbol{J}_{pu}\mathrm{diag}\left(
\boldsymbol{J}_{uu}\right)^{-1}\boldsymbol{J}_{up},
\end{align}
to precondition the inner solver inverting $\boldsymbol{S}$.
For this block we employ the multi-grid V-cycle on $\boldsymbol{S}_p$
from the HYPRE BoomerAMG package (\citep{hypre-users-manual}).
As discussed in Appendix B of \citep{chang2017variational}, the
$\boldsymbol{J}_{uu}^{-1}$ and $\boldsymbol{S}_p$,
only a single sweep of ILU(0) and HYPRE's V-cycle is needed
for the $\boldsymbol{J}_{uu}^{-1}$ and $\boldsymbol{S}_p$ matrices, 
respectively, and the GMRES method is employed to solve the entire block system.
\subsection{Variational inequality approach}
We denote the total number of degrees-of-freedom by ``$ndofs$''. 
The component-wise inequalities are denoted
by $\preceq$ and $\succeq$. That is,
\begin{subequations}
  \begin{align}
    &\boldsymbol{a} \preceq \boldsymbol{b}
    \quad \mbox{implies that} \quad a_{n}
    \leq b_n \; \forall \; n  \; \mathrm{and} \\
    &\boldsymbol{a} \succeq \boldsymbol{b}
    \quad \mbox{implies that} \quad a_{n}
    \geq b_n \; \forall \; n.
  \end{align}
\end{subequations}
The standard inner-product in Euclidean spaces is
denoted by $\langle \cdot ; \cdot \rangle$. That is,
\begin{align}
  \langle \boldsymbol{a} ; \boldsymbol{b} \rangle
  = \sum_{n}^{ndofs} a_n b_n \quad \forall \boldsymbol{a},
  \boldsymbol{b} \in \mathbb{R}^{ndofs}.
\end{align}
Let $\boldsymbol{u}$ and $\boldsymbol{p}$ denote the discrete vector of unknowns for 
velocity and pressure respectively. The vector of all degrees-of-freedom, denoted by
$\boldsymbol{v}\in\mathbb{R}^{ndofs}$, can be defined as:
\begin{align}
& \boldsymbol{v} := \left\{
		\begin{array}{l}
		\boldsymbol{u}\\ 
		\boldsymbol{p}
		\end{array} 
		\right\}.
\end{align}
For convenience, let us also define the following functional
$\boldsymbol{F}(\boldsymbol{v})\in\mathbb{R}^{ndofs}$ as
\begin{align}
&\boldsymbol{F}(\boldsymbol{v}) := \left\{
		\begin{array}{l}
		\boldsymbol{F}_{u}\\ 
		\boldsymbol{F}_{p}
		\end{array} 
		\right\}.
\end{align}
The VI formulation in the discrete setting is
posed as a \emph{Mixed Complementarity Problem
(MCP)} \citep{Kinderlehrer_VI_2000}:~Find 
$\boldsymbol{v}_{\mathrm{min}} \preceq
\boldsymbol{v} \preceq \boldsymbol{v}_{\mathrm{max}}$
such that for each $n\in\left\{1,...,ndofs\right\}$
\begin{subequations}
  \begin{alignat}{2}
    &F_{n}(\boldsymbol{v}) \geq 0 \quad
    &&\mathrm{if} \; v_{\mathrm{min}} = v_{n},  \\
    &F_{n}(\boldsymbol{v}) = 0 \quad
    &&\mathrm{if} \; v_{\mathrm{min}} \leq v_{n} \leq v_{\mathrm{max}}, \; \mathrm{and} \\
    &F_{n}(\boldsymbol{v}) \leq 0 \quad
    &&\mathrm{if} \; v_{n} =v_{\mathrm{max}}.
  \end{alignat}
\end{subequations}
where $v_{\mathrm{min}}$ and $v_{\mathrm{max}}$, respectively,
denote the minimum and maximum values for pressure and velocity. 
The constraints for pressure ($p_{\mathrm{min}}$ and $p_{\mathrm{max}}$) 
are provided by the maximum principle, whereas the minimum and 
maximum constraints for each directional component of velocity 
are $-\infty$ and $+\infty$ respectively.

If one has only lower bound constraints due to 
the presence of a positive pressure source 
(i.e., $f > 0$, $p_{\mathrm{min}} = 0$, and 
$p_{\mathrm{max}} = +\infty$), then the VI reduces 
to a non-linear complementarity problem, 
which is a special case of MCP. For details 
on non-linear complementarity problems, 
see \citep{Facchinei_FDVI_2003}.
Note that the feasible region, which is restricted by
the bound constraints, form a parallelepiped, which
is a convex set \citep{Boyd_CO_2004}. 

Let the feasible region $\mathcal{K}$ be a convex subset
of $\mathbb{R}^{ndofs}$. In our case, the feasible region
is restricted by constraints which are in the form of
finite number of linear equalities and inequalities. This
makes the feasible region to be a polyhedron, which is a
convex set \citep{Boyd_CO_2004}. It should be
noted that bound constraints are a special case of linear
inequalities. With this machinery at our disposal, one can
pose the second formulation based on variational inequalities,
which reads:~Find $\boldsymbol{v} \in \mathcal{K}$ such that
we have
\begin{align}
  \label{Eqn:S4_vi_formulation}
  \langle \boldsymbol{F}(\boldsymbol{v});\boldsymbol{w} - \boldsymbol{v}\rangle
  \geq \boldsymbol{0}
  \quad \forall \boldsymbol{w} \in \mathcal{K}.
\end{align}
\subsection{Computer implementation}
In this paper, we implement the proposed variational inequality
based computational framework using the Firedrake Project 
\citep{Rathgeber_ACM_2015,Luporini_ACM_2016,
Luporini_ACMACO_2015}. It is a python-based library that provides an automated 
system for the solution of partial differential equations using the finite 
element method. The MPI-based PETSc library is utilized as the parallel linear
algebra back-end. These solvers have been demonstrated to show good parallel
scalability for large-scale optimization-based problems \citep{Chang_JOSC_2017}.

The PETSc library \citep{balay2014petsc} provides a wide array of 
solvers for finite-dimensional 
VI's. For example, two popular algorithms are the 
\emph{semi-smooth Newton (SS)} \citep{DeLuca_MP_1996,Munson_INFORMS_2001} and
\emph{Reduced-space active-set (RS)} \citep{Benson_OMS_2006} methods. It has been shown
in \citep{Benson_OMS_2006} that the performance of SS and RS methods are 
application dependent and in \citep{chang2017variational} that the RS 
method demonstrates better solver convergence for advection-diffusion type 
equations. However, preliminary results (not shown in the paper) suggest that
it is in fact the SS method that performs better for the nonlinear flow model. Thus, we propose
the following algorithm:
\begin{enumerate}
\item Read in mesh, boundary conditions, and material properties.
\item Solve for $\boldsymbol{v}$ with no constraints (call it $\boldsymbol{v}_0$).
\item CONDITIONAL: If $\boldsymbol{p}$ violates the discrete maximum principles:
\begin {enumerate}
\item Set $\boldsymbol{v}_0$ as initial guess for SS method.
\item Solve for $\boldsymbol{v}$ using SS method.
\end{enumerate}
\end{enumerate}

In the next section, all 2D problems will be conducted in serial on an 
Intel Xeon E5-2609v3 (Haswell) processor, and the 3D problem will be 
conducted in parallel on an Intel Xeon Phi 7250 (Knights Landing) processor.

`\section{REPRESENTATIVE NUMERICAL RESULTS}
\label{Sec:S5_NN_NR} 
\subsection{$h$ - convergence study}
\label{sec:h_conv}
\begin{figure}[h!]
  \centering
  \subfigure{\includegraphics[scale=0.85]{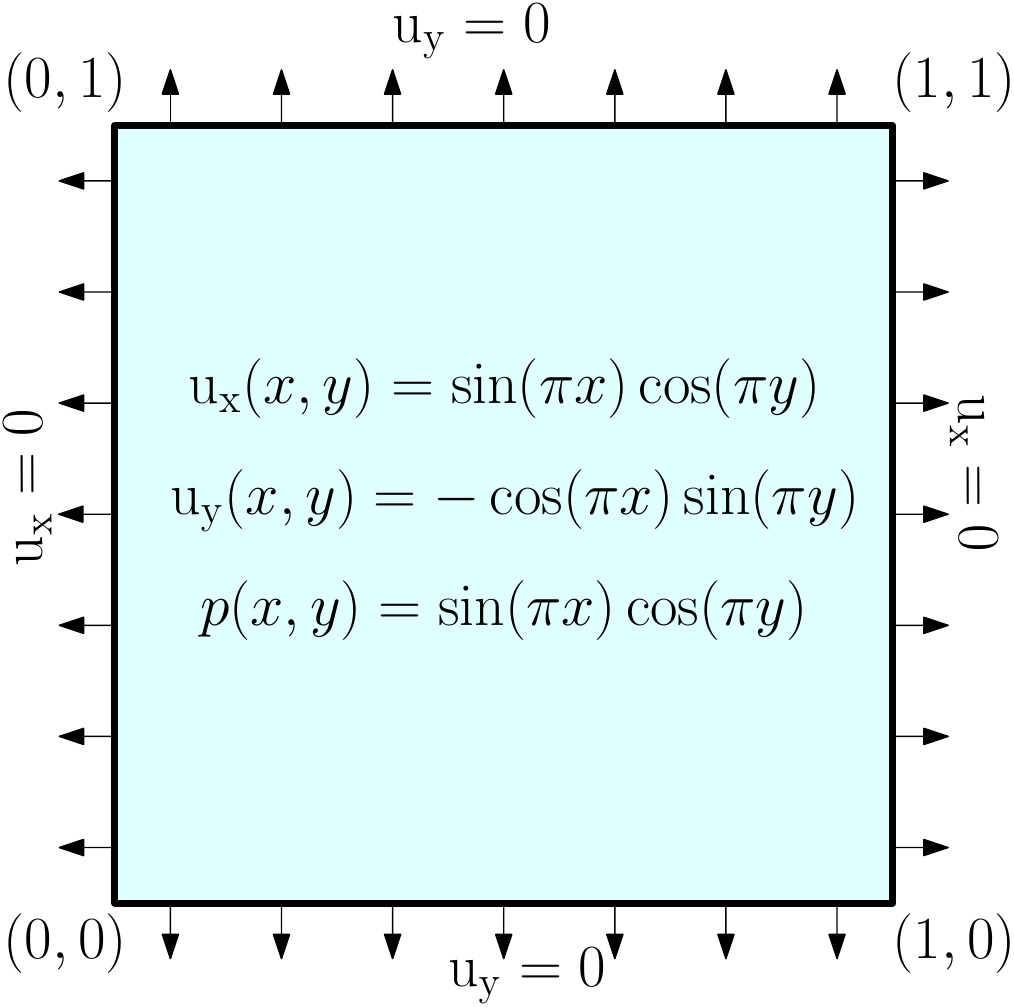}}
  \captionsetup{format=hang}
  \caption{Problem statement for the $h$-convergence study, 
    showing the computational domain, pressure and velocity 
    functions and the boundary conditions.}
  \label{fig:h_conv_q}
\end{figure}
We first perform an $h$-convergence study on the mixed formulations 
to verify that the Firedrake project library and proposed solver 
methodologies are converging schemes. A finite element solution 
is said to be converging if the difference between the exact and 
numerical solutions decreases with the mesh refinement. Consider 
a unit square to be the computational domain with the following 
expressions for the pressure and velocity fields:
\begin{subequations}\label{eq:h-conv}
	\begin{align}
		& \mathbf{u}(x,y,z) = 
		\left\{
		\begin{array}{ll}
		\sin(\pi x)\cos(\pi y)\\ -\cos(\pi x)\sin(\pi y)
		\end{array}\right\}
		\; \mathrm{and}\\
		& p(x,y) = \sin(\pi x)\sin(\pi y).			
	\end{align}
\end{subequations}
Through the method of manufactured solutions, by substituting equation 
\eqref{eq:h-conv} into equation \eqref{eq:modified_darcy} we obtain 
the following expression for the body force:
\begin{equation}
		 \mathbf{b}(x,y)  = \frac{1}{\rho}
		\left[\mathbf{\alpha}(p(x,y))\right]^{-1}\left\{
		                \begin{array}{l} \sin(\pi x)\cos(\pi y)+ \sin^{2}(\pi x) \sin(\pi y) \cos(\pi y) \\ 
		                  \qquad+ \pi \cos(\pi x) \sin(\pi y) \\ 
		                  -\cos(\pi x)\sin(\pi y) - \sin(\pi x)\sin^{2}(\pi y)\cos(\pi x) \\ 
		                  \qquad+\pi \sin(\pi x)\cos(\pi y)
		                \end{array}
		                \right\},		
\end{equation}
where $\mathbf{\alpha}$ is given by equation 
\eqref{eq:barus_lin}. The boundary conditions for this problem are:
\begin{table}[t]
\centering
\captionsetup{format=hang}
\caption{User defined parameters for $h$-convergence study}
\begin{tabular}{c c}
\hline\hline
Parameter & Values \\
\hline
$\mu_0$ & 1 \\
$\beta_B$ & 0 and 1\\
$\rho$ & 1\\
$\mathbf{K}$ &  $\mathbf{I}$ \\ 
$\rho\mathbf{b}$ &  $\mathbf{0}$ \\ 
\hline
\label{table: h-conv}
\end{tabular}
\end{table}
%
%
\begin{equation}
  u_x(x =0, y) = u_x(x = 1, y) = 
  u_y(x, y = 0) = u_y(x, y =1) = 0.					
\end{equation}
Figure \ref{fig:h_conv_q} provides a pictorial description
of the problem, and Table \ref{table: h-conv} lists the
parameters employed in the numerical simulation. 
%
\begin{figure}[hbt]
	\hspace{1pt}
	\centering
	\subfigure{\includegraphics[scale=0.6]{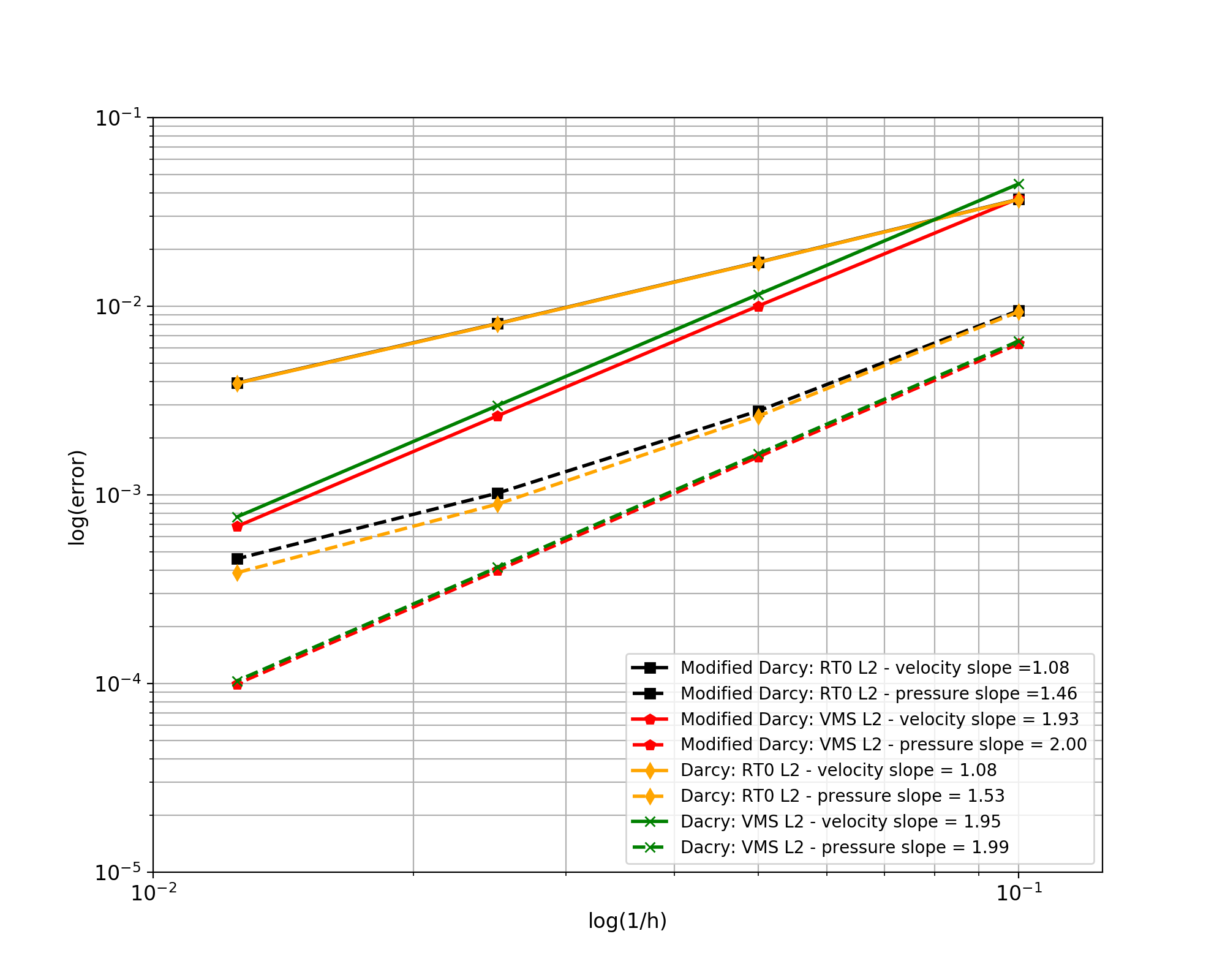}}
	\captionsetup{format=hang}
	\caption{Convergence plots of $L_2$ errors for RT0 and VMS formulations for the standard Darcy $(\beta_B = 0)$ and 
	modified Darcy $(\beta_B = 1)$ models.}
	\label{fig:h_conv}
\end{figure}
Figure \ref{fig:h_conv} provides a comparison between $L_2$
error norm convergence rates for RT0 and VMS. Theoretical
convergence rates of both velocity and pressure under RT0
is unity whereas the convergence rates for these two fields
under VMS is two. We see that the slopes for the standard
Darcy model (where $\beta_B = 0$) are similar to the
theoretical slopes, which verifies
the convergence of the Firedrake 
project's finite element framework. Furthermore, extending the 
mixed formulations to the modified Darcy model with pressure-dependent
viscosity (where $\beta_B > 0$) has similar convergence. The studies 
performed so far suggest that the Firedrake project is a suitable 
software package for conducting finite element simulations, and 
we now examine scenarios where VI is needed
to enforce maximum principles. 
\FloatBarrier
\subsection{Square reservoir}
\begin{figure}[h!]
	\centering
	\subfigure{\includegraphics[scale=0.71]{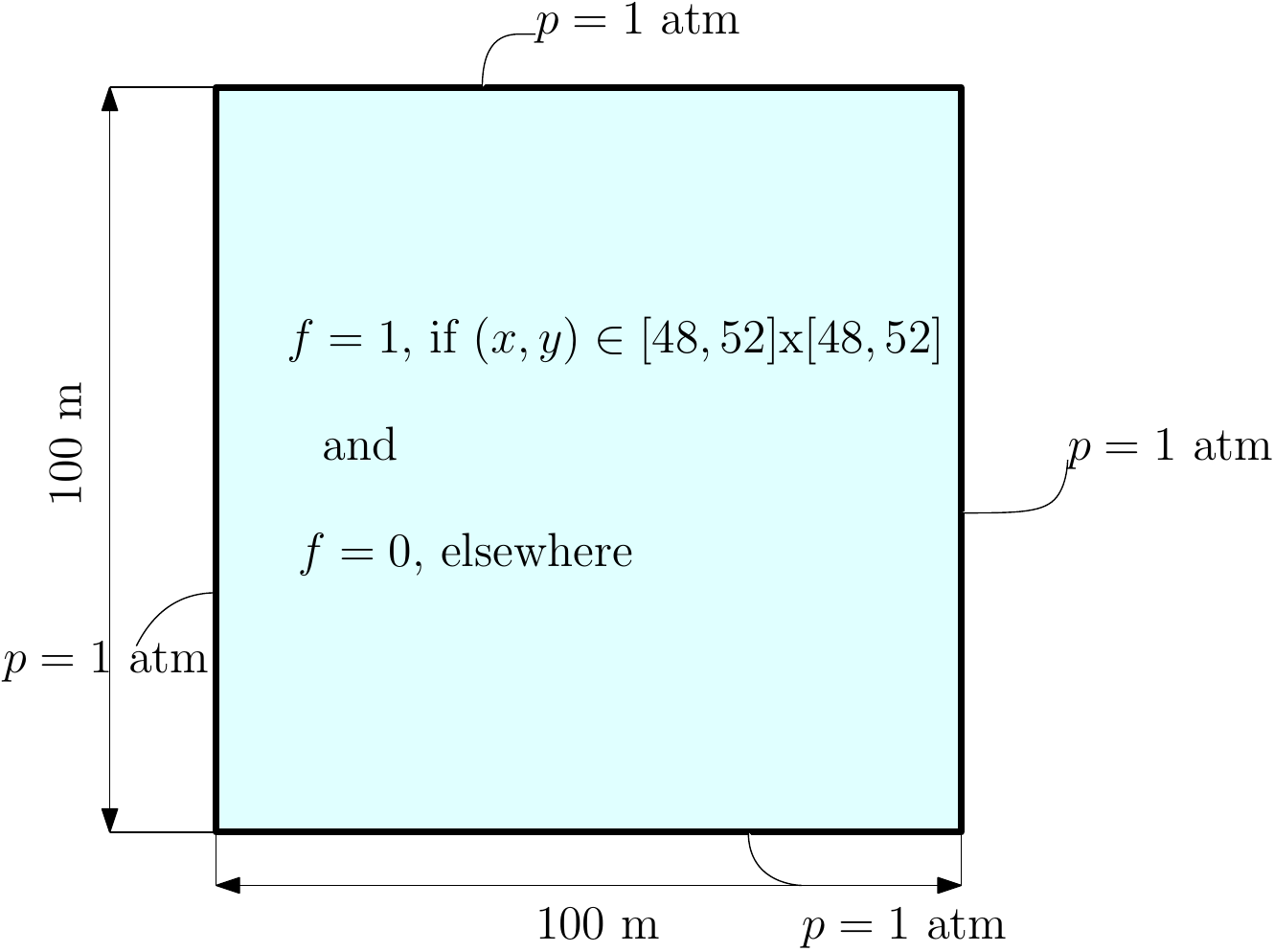}}
	\captionsetup{format=hang}
	\caption{Pictorial description of square reservoir problem. The computational 
	domain, boundary conditions, and volumetric source are shown.}
	\label{fig:bench_q}
\end{figure}
This 2D heterogeneous problem aims to 
illustrate not only the effectiveness of the proposed VI based framework 
to ensure DMP for the pressure field but also how levels of heterogeneity and anisotropy
affect the overall computational effort. Consider a square reservoir on a domain 
$\Omega$: = (0 m, 100 m) $\times$ (0 m, 100 m) with the following anisotropic 
heterogeneous permeability tensor:
\begin{align}
\mathbf{K} = k_0\begin{pmatrix}
y^2 + \epsilon x^2 & -(1-\epsilon)xy\\
-(1-\epsilon)xy & x^2 + \epsilon y^2
\end{pmatrix}\;\mathrm{m}^{-2},
\end{align}
where $k_0=10^{-13}$ m$^2$ denotes the base permeability and 
$\epsilon$ is a user-defined value that controls the level of anisotropy. 
Three values of $\epsilon$ are considered as shown in Table \ref{table: eps_cases}.
 \begin{table}[t]
 \centering
 \captionsetup{format=hang}
 \caption{Square reservoir: different values of $\epsilon$ used for this study.}
 \begin{tabular}{c c }
 \hline\hline
 Case ID & Value of $\epsilon$   \\
 \hline
 1 & $10^{-3}$  \\
 2 & $10^{-2}$ \\
 3 & $10^{-1}$ \\ 
 \hline
 \label{table: eps_cases}
 \end{tabular}
 \end{table}
 \begin{figure}[t!]
	\centering
	\subfigure[No VI ($\epsilon=10^{-3}$)]{\includegraphics[scale=0.4]{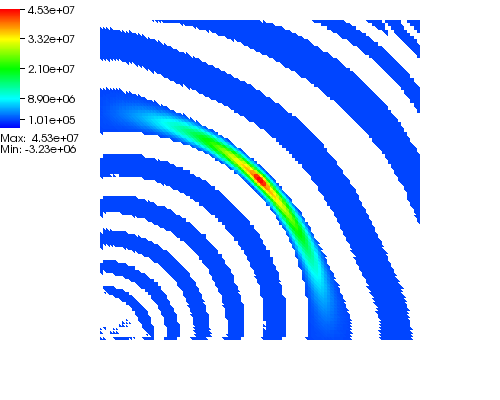}}	
	\subfigure[With VI ($\epsilon=10^{-3}$)]{\includegraphics[scale=0.4]{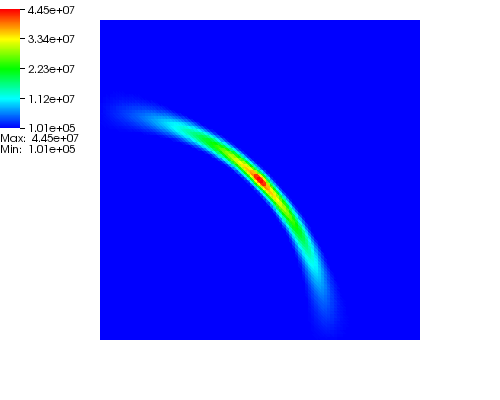}}\\
	\subfigure[No VI ($\epsilon=10^{-2}$)]{\includegraphics[scale=0.4]{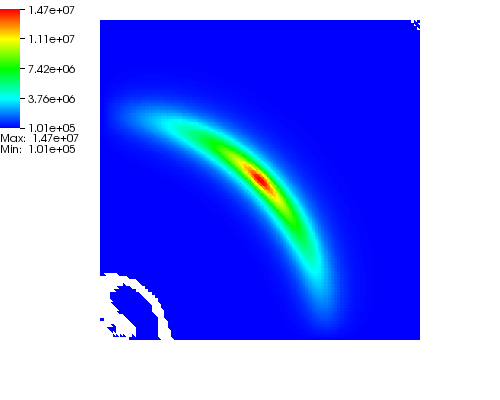}}
	\subfigure[With VI ($\epsilon=10^{-2}$)]{\includegraphics[scale=0.4]{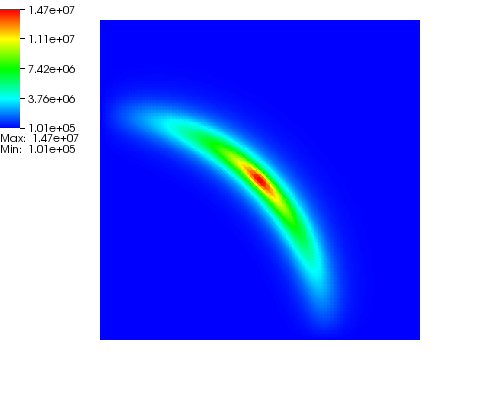}}\\
	\subfigure[No VI ($\epsilon=10^{-1}$)]{\includegraphics[scale=0.4]{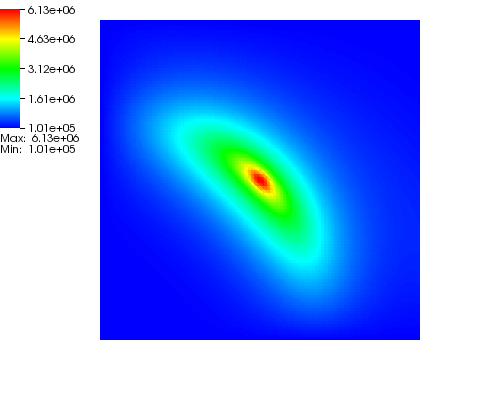}}
	\subfigure[With VI ($\epsilon=10^{-1}$)]{\includegraphics[scale=0.4]{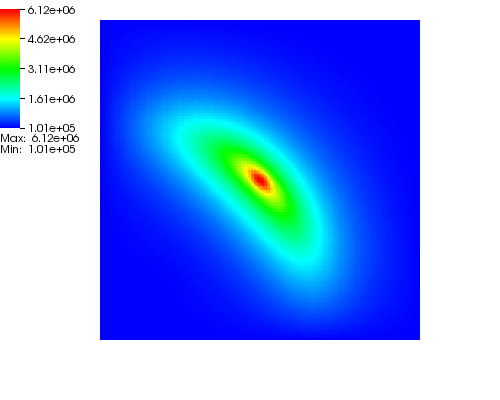}}	
\captionsetup{format=hang}
\caption{Square reservoir: pressure contours for RT0 formulation before (left) and after (right) VI. The white spaces are representative of DMP violations.}
\label{fig:pres_rt_vio}
\end{figure}
\begin{figure}[t!]
	\centering
	\subfigure[No VI ($\epsilon=10^{-3}$)]{\includegraphics[scale=0.4]{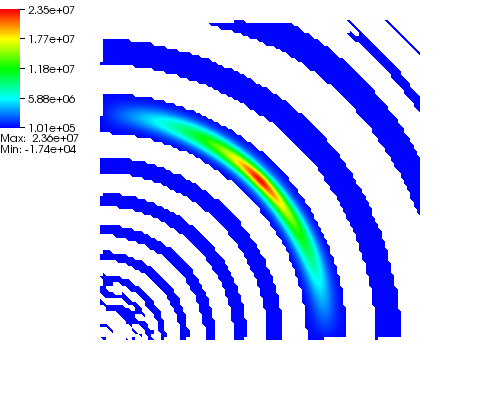}}	
	\subfigure[With VI ($\epsilon=10^{-3}$)]{\includegraphics[scale=0.4]{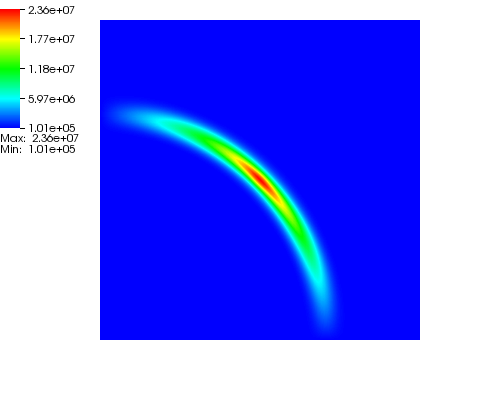}}\\
	\subfigure[No VI ($\epsilon=10^{-2}$)]{\includegraphics[scale=0.4]{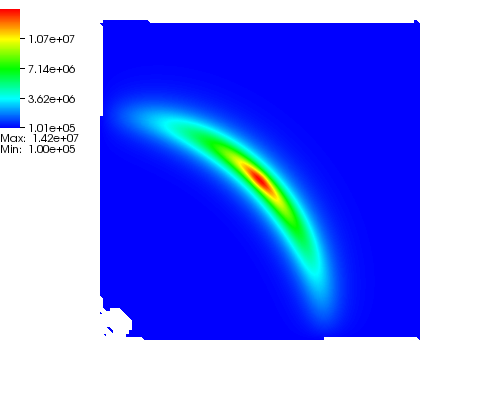}}
	\subfigure[With VI ($\epsilon=10^{-2}$)]{\includegraphics[scale=0.4]{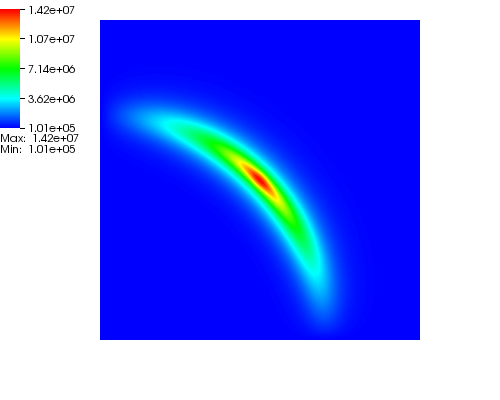}}\\
	\subfigure[No VI ($\epsilon=10^{-1}$)]{\includegraphics[scale=0.4]{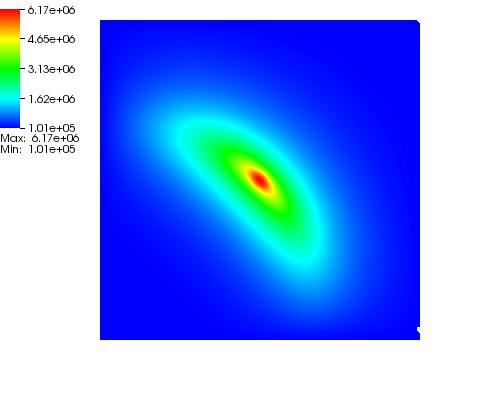}}
	\subfigure[With VI ($\epsilon=10^{-1}$)]{\includegraphics[scale=0.4]{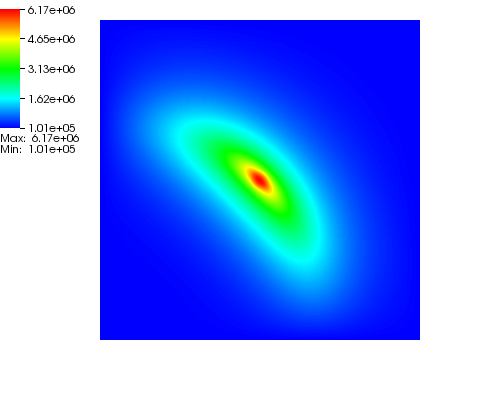}}	
\captionsetup{format=hang}
\caption{Square reservoir: pressure contours for VMS formulation before (left) and 
after (right) VI. The white spaces are representative of DMP violations.}
\label{fig:pres_vms_vio}
\end{figure}
We assume that $\mu=10^{-3}$ Pa$\cdot$s and $\beta_B = 10^{-8}$ Pa$^{-1}$. 
A constant pressure of 101325 Pa (1 atm) is applied on the entire boundary for 
both mixed formulations, see Figure \ref{fig:bench_q} for a pictorial description. 
Let $\rho\mathbf{b} = \mathbf{0}$ and the volumetric source $f$ be given by:
   \begin{equation}
     f=
     \begin{cases}
       1\;\mathrm{s}^{-1}, & \text{if}\ (x,y) \in [48,52] \mathrm{x} [48,52]\\
       0, & \text{elsewhere}
     \end{cases}.
   \end{equation}
Since a positive forcing function is present, only the lower bound constraint
is enforced in the VI framework. Both RT0 and VMS formulations are employed 
for the numerical discretization, 
with triangular elements of $\mathit{h}$-size = 1 m. We perform 
this study for three  
cases of $ \epsilon$ as listed in Table \ref{table: eps_cases}.
  
Figures \ref{fig:pres_rt_vio} and \ref{fig:pres_vms_vio} show the pressure 
contours of the RT0 and VMS formulations, respectively, for different values of 
$\epsilon$. The white regions are representative of DMP violations of pressure. 
Moreover, the effect of enforcing the VI framework over the RT0 and VMS formulations 
is also shown. These figures demonstrate that the VI framework is capable of enforcing 
the lower bound constraints for pressure values. It is also interesting to note that 
a smaller $\epsilon$ results in more violations regardless of the finite
element discretization. Tables \ref{table:Benchmark_RT0_Iter} and 
\ref{table:Benchmark_VMS_Iter} illustrate
the effect of $\epsilon$ on both the number of violating cells as well as solver
performance. Small values of $\epsilon$ make the systems of equation much harder to
solve as both time to solution and number of solver iterations increase. However, 
it can be seen that the additional computational cost from the VI solver is 
not significant; it was shown in \citep{chang2017variational} that the same 
computational framework with the RS method for
advection-diffusion equations increased time to solution 
by up to a factor of 20 whereas for this particular problem the SS 
method increased time to solution by no more than 66 percent. 

In a mixed formulation, altering the pressures
or velocities may have a direct impact over
their counterparts, but such changes to the
velocity imposed by the VI framework do not
have a negative impact on the overall numerical
accuracy. The velocities under the VI framework
for the RT0 and VMS formulations are shown in
Figures \ref{fig:vel_rt} and \ref{fig:vel_vms},
respectively. Absolute differences between
velocity fields obtained from the mixed
formulations without the imposed VI framework
and the velocity fields obtained from the mixed
formulations under the imposed VI framework are
shown in those figures. These figures suggest
that enforcement of the DMP on pressure does
impact the velocities. Some formulations like
RT0 have larger differences in velocity magnitudes
whereas for others like VMS, it may not be as large. 
However, any such change to the velocity field can
have a significant impact on subsurface transport
especially for long periods of time.
\begin{table}[t!]
  \captionsetup{format=hang}
  \caption{Square reservoir: computational
    results and initial violations for RT0.}
    \begin{tabular}{ c| c c c| c c c | c |c }
      \hline
      \hline
    
       \multirow{1}{*}{Case} &
         \multicolumn{3}{c}{RT0} &
         \multicolumn{3}{c}{VI over RT0} &
       \multirow{1}{*}{Total} &
       \multirow{1}{*}{ \% Vio-} \\
         
      ID & KSP & SNES & Time & KSP & SNES & Time & Time & lations \\
 \hline      
  1 & 2059 &	8 &	1.98E+001 &	664	& 3	& 4.53E+000	& 2.43E+001	& 50.54	\\
  2 & 227  & 4 &	2.63E+000 &	125	& 2	& 1.54E+000	& 4.18E+000	& 2.06 \\
  3 & 53	& 3 & 9.53E-001	& 36 & 2 & 6.26E-001 & 1.58E+000 & 0.02 \\
       \hline    
    \end{tabular}
    \label{table:Benchmark_RT0_Iter}
  \end{table}  
  \begin{table}[t!]
  \captionsetup{format=hang}
  \centering
  \caption{Square reservoir: computational results and initial violations for VMS.}
    \begin{tabular}{ c |c c c| c c c | c |c }
      \hline
      \hline   
       \multirow{1}{*}{Case} &
         \multicolumn{3}{c}{VMS} &
         \multicolumn{3}{c}{VI over VMS} &
       \multirow{1}{*}{Total} &
       \multirow{1}{*}{ \% Vio-} \\
         
      ID & KSP & SNES & Time & KSP & SNES & Time & Time & lations \\
      \hline        
  1 & 608 & 5 & 1.75E+001 & 47 & 2 & 3.08E+000 & 2.06E+001 &	55.67 \\
  2 & 185 & 4 & 8.31E+000 & 71 &	2 &	3.80E+000 &	1.21E+001 &	2.37 \\
  3 & 51	& 3	& 5.00E+000	& 32 & 2 & 3.29E+000 & 8.29E+000 & 1.44	\\
       \hline    
    \end{tabular}
    \label{table:Benchmark_VMS_Iter}
  \end{table}  
 \begin{figure}[t]
	\centering
	\subfigure[VI velocity ($\epsilon=10^{-3}$)]{\includegraphics[scale=0.4]{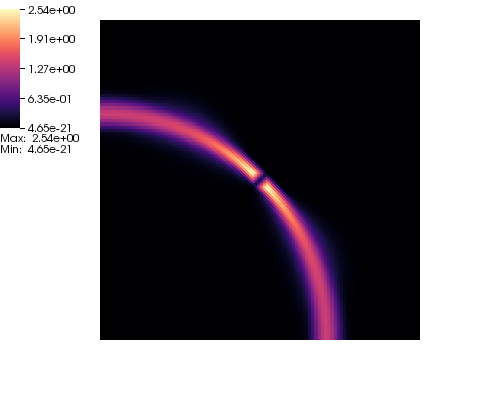}}
	\subfigure[Absolute diff. ($\epsilon=10^{-3}$)]{\includegraphics[scale=0.4]{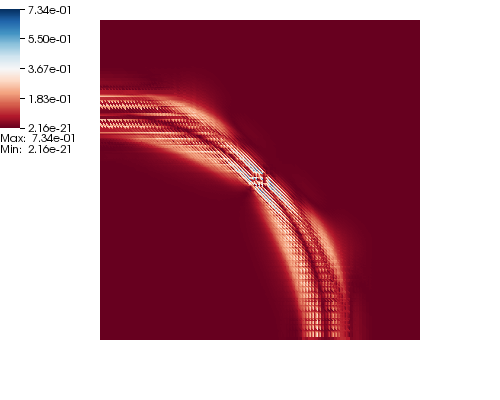}}\\
	\subfigure[VI velocity ($\epsilon=10^{-2}$)]{\includegraphics[scale=0.4]{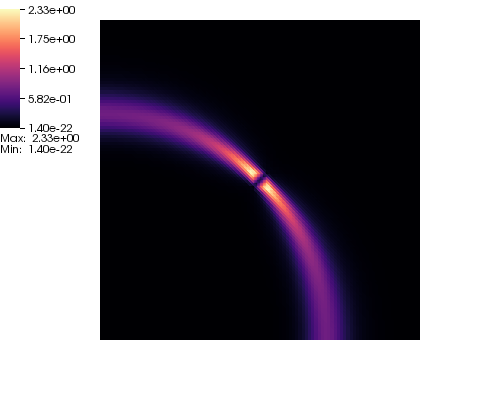}}
	\subfigure[Absolute diff. ($\epsilon=10^{-2}$)]{\includegraphics[scale=0.4]{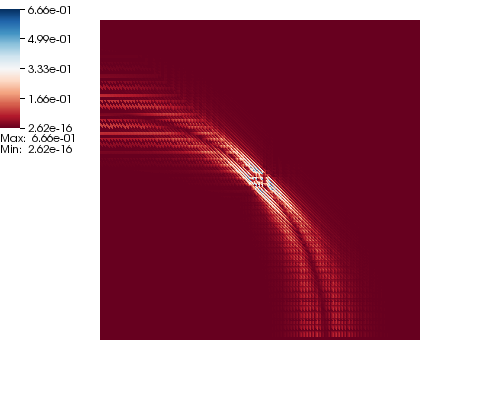}}\\
	\subfigure[VI velocity ($\epsilon=10^{-1}$)]{\includegraphics[scale=0.4]{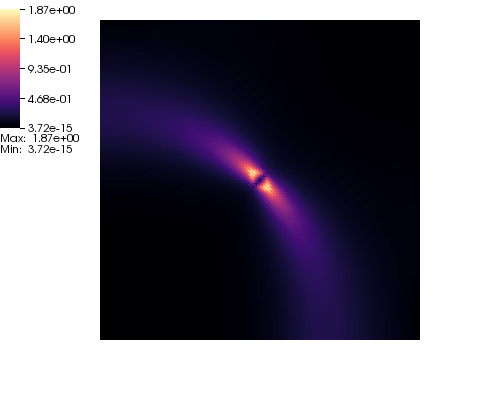}}	
	\subfigure[Absolute diff. ($\epsilon=10^{-1}$)]{\includegraphics[scale=0.4]{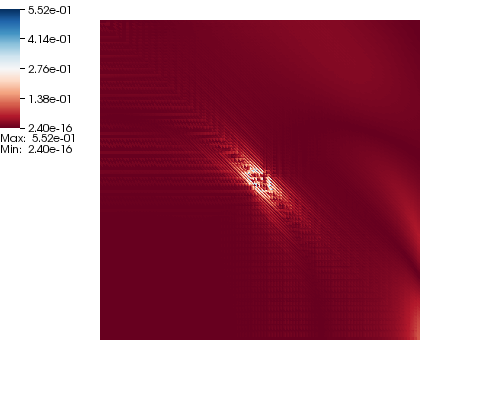}}
\captionsetup{format=hang}
\caption{Square reservoir: velocity profiles for VI based framework imposed over RT0 formulation (left) and the absolute differences between the non-VI and VI velocities (right).}
\label{fig:vel_rt}
\end{figure}
\begin{figure}[t]
	\centering
	\subfigure[VI velocity ($\epsilon=10^{-3}$)]{\includegraphics[scale=0.4]{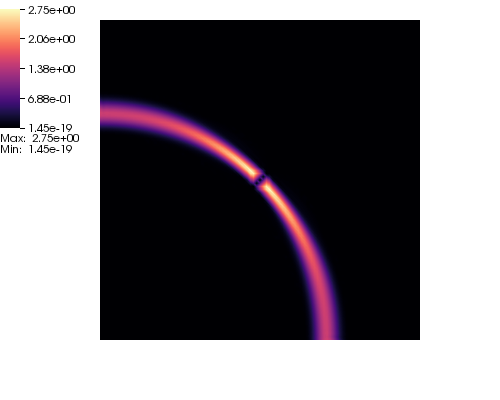}}
	\subfigure[Absolute diff. ($\epsilon=10^{-3}$)]{\includegraphics[scale=0.4]{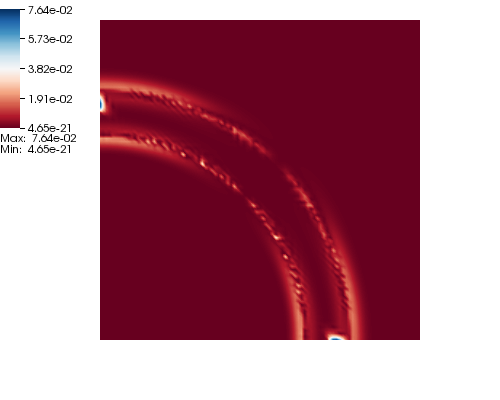}}\\
	\subfigure[VI velocity ($\epsilon=10^{-2}$)]{\includegraphics[scale=0.4]{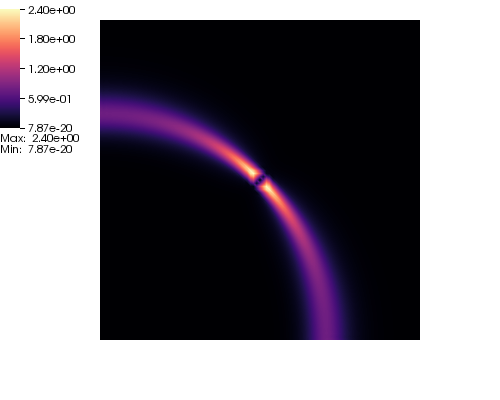}}	
	\subfigure[Absolute diff. ($\epsilon=10^{-2}$)]{\includegraphics[scale=0.4]{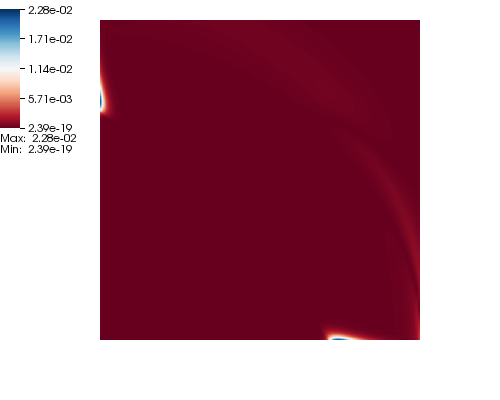}}\\
	\subfigure[VI velocity ($\epsilon=10^{-1}$)]{\includegraphics[scale=0.4]{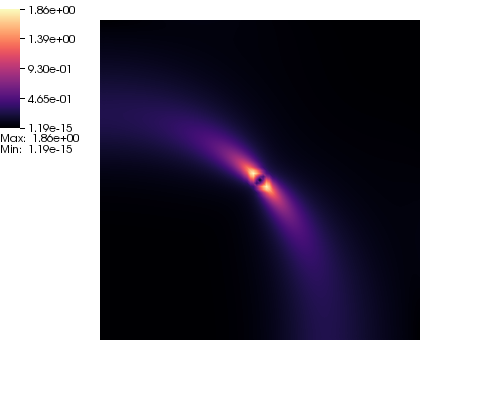}}	
	\subfigure[Absolute diff. ($\epsilon=10^{-1}$)]{\includegraphics[scale=0.4]{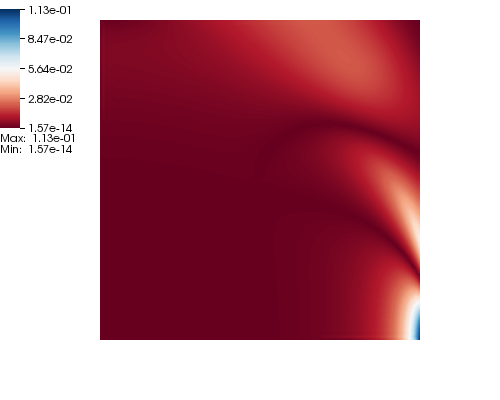}}
\captionsetup{format=hang}
\caption{Square reservoir: velocity profiles for VI based framework imposed over VMS formulation (left) and the absolute differences between the non-VI and VI velocities (right).}
\label{fig:vel_vms}
\end{figure}
\FloatBarrier
\subsection{Circular reservoir}
\label{sec:s_s}
\begin{figure}[hbt]
  \centering
  \captionsetup{format=hang}
  \subfigure{\includegraphics[scale =0.61]{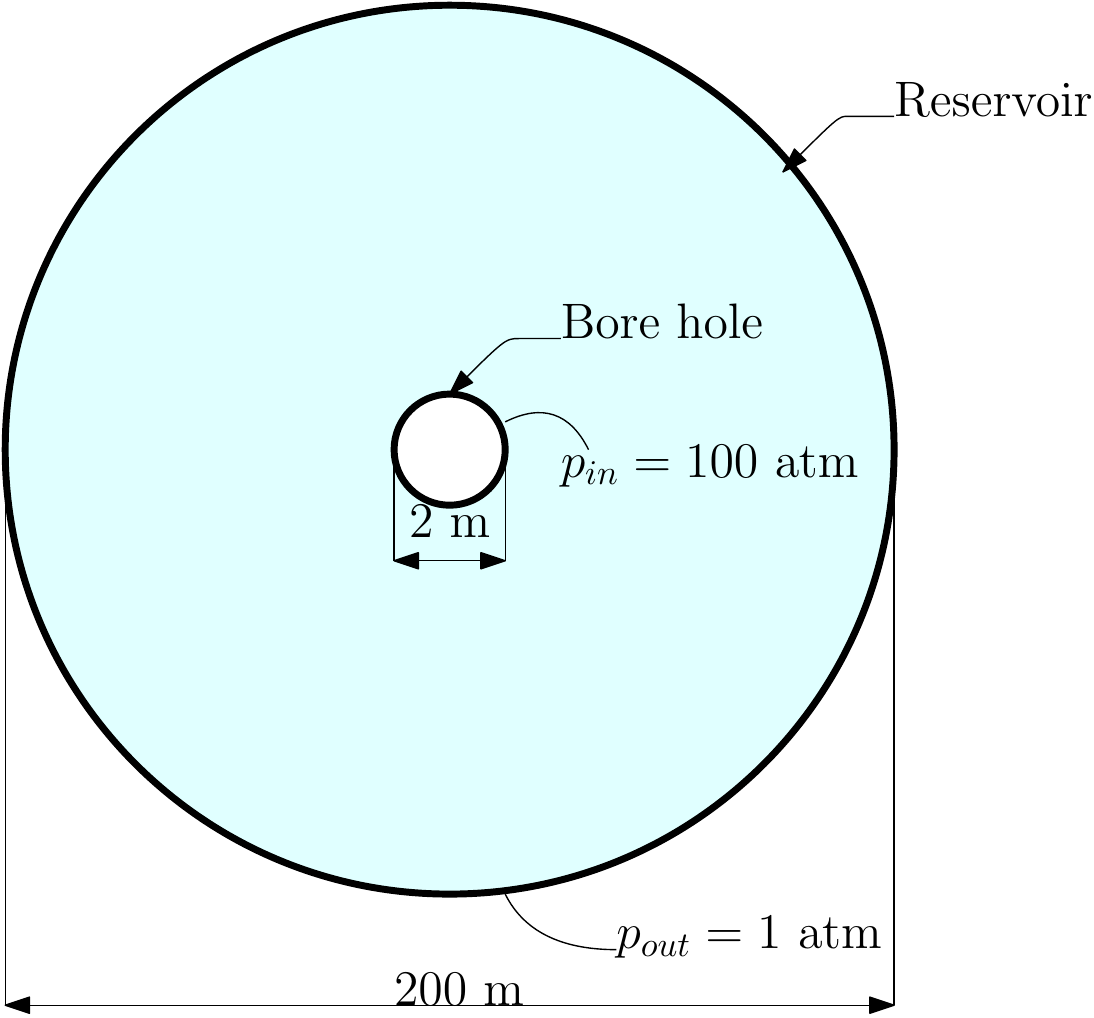}}
  \caption{Pictorial description of the circular 
    reservoir problem, showing the computational 
    domain and boundary conditions.}
  \label{fig:res_q}
\end{figure}

For this problem, the task is to study the algorithmic scalability of the 
proposed computational framework when different unstructured grids and different
$\beta_B$ coefficients are employed. Consider the circular reservoir 
shown in Figure \ref{fig:res_q} which has a 100 m outer radius and an inner circular 
borehole radius of 1 m. Neither the specific body force nor the volumetric force is
present for this problem so the VI framework imposes both lower
bound and upper bound constraints. The lower bound or outer boundary is maintained 
at the atmospheric pressure ($p_{out} = 1.0\times 10^{5}$ Pa). At the injection hole, 
a constant pressure of 100 atm is maintained ($p_{in} = 1.0 \times 10^{7}$ Pa) and 
serves as the upper bound constraint. The different meshes used for this study, as well 
as the corresponding numbers of degrees-of-freedom for the RT0 and VMS formulations, 
are provided in Table \ref{table:mesh_id}. Each mesh ID and mixed formulation 
shall be simulated with various values of $\beta_B$ provided in Table \ref{table:beta_case}.
\begin{table}[t!]
\centering
\captionsetup{format=hang}
\caption{Hierarchy of meshes for the circular reservoir problem.}
  \begin{tabular}{c c c| c c c |c c c }
    \hline
    \hline
    \multirow{1}{*}{Mesh} &
    \multirow{1}{*}{No.of} &
    \multirow{1}{*}{No.of} &
      \multicolumn{3}{c}{RT0} & 
      \multicolumn{3}{c}{VMS} \\     
   ID & nodes & elements & V-DOF &  P-DOF & Total-DOF & V-DOF &  P-DOF & Total-DOF \\ 
    \hline
    1 & 1379 & 2758	& 4061 & 2682 & 6743 & 2758	& 1379 & 4137\\
    
    2 & 2485 & 4970 & 7744 & 5128 & 12872 & 5232 & 2616 & 7848 \\
    
    3 & 4822 & 9644	& 14326 & 9504 & 23830 & 9644 &	4822 & 14466 \\
    
    4 & 10356 &	20712 & 30858 &	20502 &	51360 & 20712 &	10356 &	31068 \\
    
    5 &	17673 &	35346 &	52741 &	35068 &	87809 &	35346 &	17673 &	53019 \\
   
    6 &	26926 & 53852 &	80440 &	53514 &	133954 & 53852 & 26926 & 80778\\ 
   
    7 &	42911 & 85822 &	128301 & 85390 & 213691 & 85822 & 42911& 128733\\
    \hline
  \end{tabular}
  \label{table:mesh_id}
\end{table}
\begin{table}[t!]
\captionsetup{format=hang}
\centering
	\caption{Circular reservoir: $\beta_B $ values used for this study.}
	\begin{tabular}{c c c}
		\hline\hline
		Case ID & $\beta_B $  & Units \\	
		\hline		
		1 & $10^{-8}$ & $\mathrm{Pa^{-1}}$\\	
		2 & $10^{-7}$ & $\mathrm{Pa^{-1}}$\\		
		3 & $10^{-6}$ & $\mathrm{Pa^{-1}}$\\ 		
		\hline
	\end{tabular}
	\label{table:beta_case}
\end{table}  

Let $\mu = 10^{-3}$ Pa$\cdot$s and the anisotropic permeability tensor be given by:
 \begin{align}
  \mathbf{K} =
 \begin{pmatrix}
 \mathrm{cos(\theta)} &-\mathrm{sin(\theta)}\\
 \mathrm{sin(\theta)} &\mathrm{cos(\theta)}                    
 \end{pmatrix} 
 \begin{pmatrix}
 10^{-10}  &0 \\
 0    &10^{-13}  
 \end{pmatrix}
 \begin{pmatrix}
 \mathrm{cos(\theta)}  &\mathrm{sin(\theta)}\\
 -\mathrm{sin(\theta)} &\mathrm{cos(\theta)}
 \end{pmatrix}\;\mathrm{m}^2,
 \end{align}
 \begin{table}[t!]
 \captionsetup{format=hang}
 \centering
 \caption{Circular reservoir: percentage of DMP violations for RT0 and VMS formulations.}
   \begin{tabular}{ c| c c c |c c c}
     \hline
     \hline 
     \multirow{2}{*}{Mesh ID} &
     \multicolumn{3}{c|}{ \% Violations: RT0} &
     \multicolumn{3}{c}{\% Violations: VMS} \\

     & Case 1 & Case 2 & Case 3 & Case 1 & Case 2 & Case 3 \\
	 \hline
	        
	1 &	56.52 &	57.34 &	62.60 &	48.77 &	49.03 &	49.89	\\
	
	2 &	53.33 &	54.43 &	59.89 &	46.04 &	46.20 &	47.11	\\
	
	3 &	54.63 &	55.66 &	59.04 &	45.01 &	47.92 &	48.98 	\\
	
	4 &	53.19 &	53.92 &	56.73 &	47.95 &	50.60 & 51.55	\\
	
	5 &	51.06 &	51.60 &	54.46 &	51.50 &	46.29 &	50.13	\\
	
	6 &	52.85 &	51.87 &	53.01 &	47.25 &	48.81 &	52.58 	\\

	7 &	52.00 &	50.02 &	51.65 &	47.52 &	36.12 &	49.62\\

      \hline    
   \end{tabular}
   \label{table:Reser_violations}
 \end{table}
where $\theta = \frac{\pi}{3}$. Figures \ref{fig:homo_pres_rt_vio} 
and \ref{fig:homo_pres_vms_vio} show the 
pressures of RT0 and VMS formulations, respectively, for different values of $\beta_B$ 
before and after VI is imposed. It can be seen from these figures that DMP violations 
occur regardless of the $\beta_B$ used, and again the computational framework fixes 
these violations. Graphical representation of these violations are shown only for the 
first mesh (Mesh ID: 1), but detailed results concerning the violations for all 
other meshes are provided in Table \ref{table:Reser_violations}. 

 \begin{figure}[t!]
	\centering
	\subfigure[No VI ($\beta_B=10^{-8}$ Pa$^{-1}$)]{\includegraphics[scale=0.4]{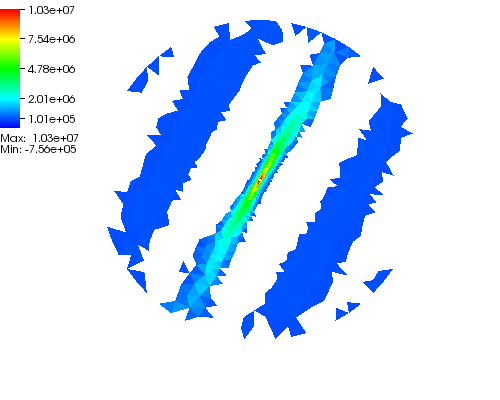}}
	\subfigure[With VI ($\beta_B=10^{-8}$ Pa$^{-1}$)]{\includegraphics[scale=0.4]{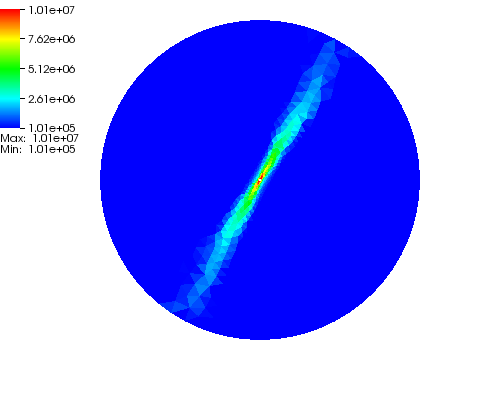}}\\
	\subfigure[No VI ($\beta_B=10^{-7}$ Pa$^{-1}$)]{\includegraphics[scale=0.4]{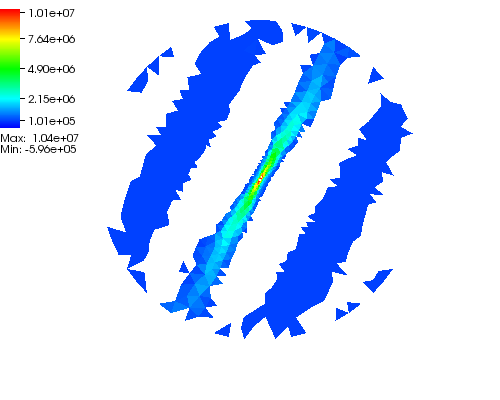}}
	\subfigure[With VI ($\beta_B=10^{-7}$ Pa$^{-1}$)]{\includegraphics[scale=0.4]{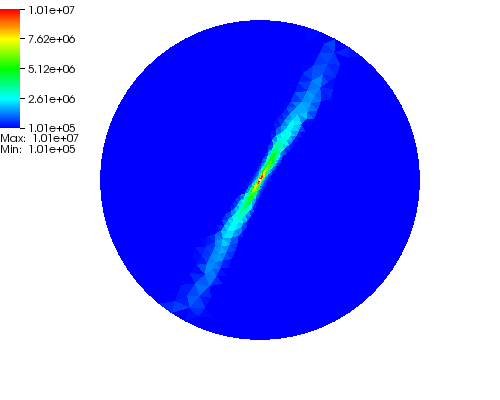}}\\
	\subfigure[No VI ($\beta_B=10^{-6}$ Pa$^{-1}$)]{\includegraphics[scale=0.4]{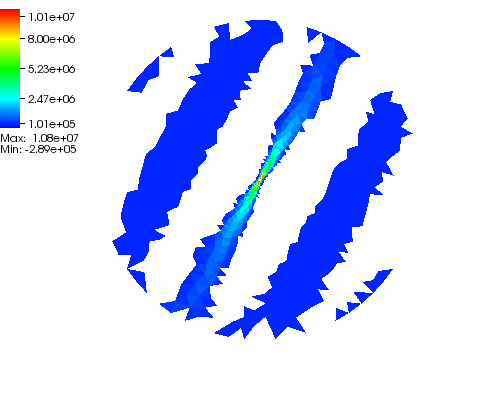}}
	\subfigure[With VI ($\beta_B=10^{-6}$ Pa$^{-1}$)]{\includegraphics[scale=0.4]{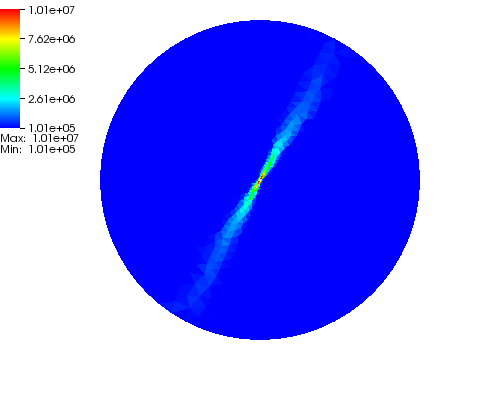}}		
\captionsetup{format=hang}
\caption{Circular reservoir: pressure contours for RT0 formulation before (left) 
and after (right) VI for various $\beta_B$. The white spaces are 
representative of DMP violations. It can be seen that $\beta_B$ has no 
affect on the initial violations.}
\label{fig:homo_pres_rt_vio}
\end{figure}
\begin{figure}[t!]
	\centering
	\subfigure[No VI ($\beta_B=10^{-8}$ Pa$^{-1}$)]{\includegraphics[scale=0.4]{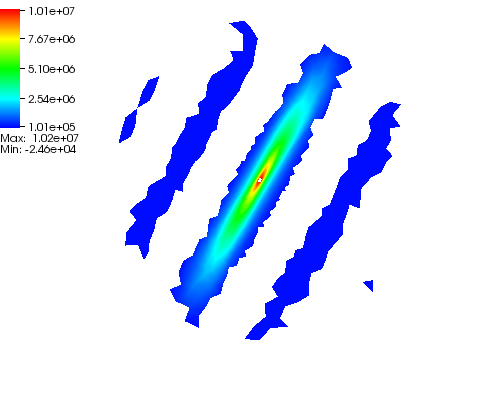}}
	\subfigure[With VI ($\beta_B=10^{-8}$ Pa$^{-1}$)]{\includegraphics[scale=0.4]{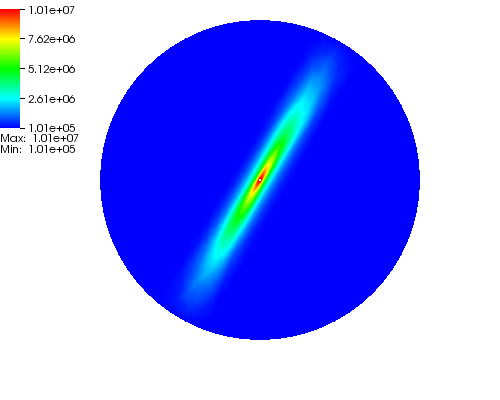}}\\
	\subfigure[No VI ($\beta_B=10^{-7}$ Pa$^{-1}$)]{\includegraphics[scale=0.4]{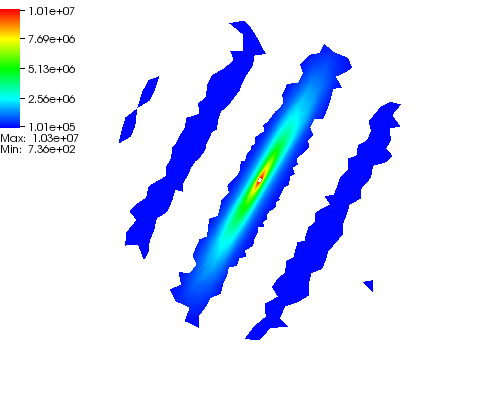}}
	\subfigure[With VI ($\beta_B=10^{-7}$ Pa$^{-1}$)]{\includegraphics[scale=0.4]{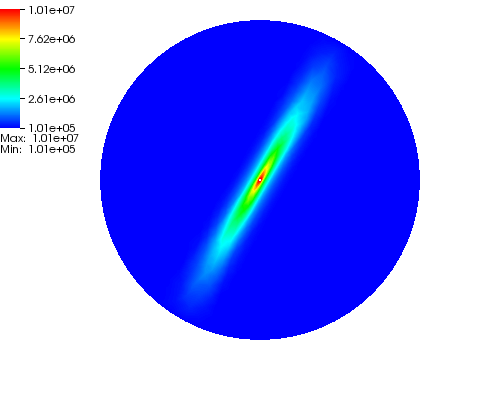}}\\
	\subfigure[No VI ($\beta_B=10^{-6}$ Pa$^{-1}$)]{\includegraphics[scale=0.4]{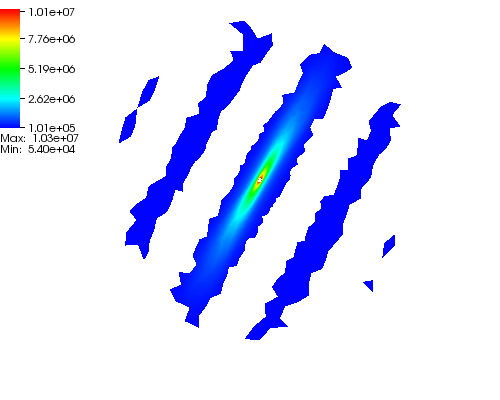}}
	\subfigure[With VI ($\beta_B=10^{-6}$ Pa$^{-1}$)]{\includegraphics[scale=0.4]{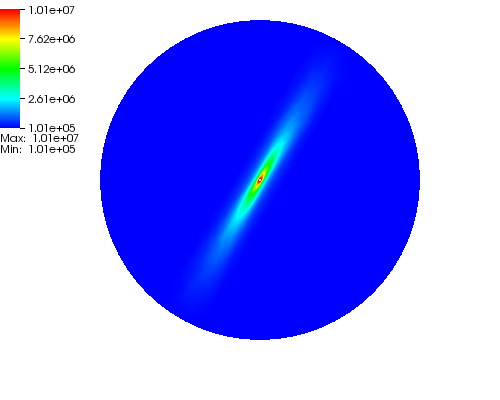}}		
\captionsetup{format=hang}
\caption{Circular reservoir: pressure contours for VMS formulation before (left) 
and after (right) VI for various $\beta_B$. The white spaces are 
representative of DMP violations. It can be seen that $\beta_B$ has no 
affect on the initial violations.}
\label{fig:homo_pres_vms_vio}
\end{figure}

\begin{table}[t]
\captionsetup{format=hang}
\centering
\caption{Circular reservoir: solver iterations and time-to-solution for RT0 formulation.}
  \begin{tabular}{ c |c c c| c c c | c }
    \hline
    \hline 
     \multirow{1}{*}{Mesh} &
       \multicolumn{3}{c|}{RT0} &
       \multicolumn{3}{c|}{VI over RT0} &
     \multirow{1}{*}{Total} \\
       
    ID & KSP & SNES & Time & KSP & SNES & Time & time  \\
    \hline 
	\hline
	 \multicolumn{8}{c}{$\beta_B=10^{-8}$ Pa$^{-1}$}\\
	
	    
	    \hline
	    \hline   
	1 & 375	& 3	& 7.95E-01	& 30 & 1 & 9.17E-02 & 8.86E-01 \\
	
	2 & 263	& 3	& 9.94E-01	& 31 & 1 & 1.13E-01 & 1.11E+00 \\
	
	3 & 303	& 3	& 2.00E+00	& 31 & 1 & 1.65E-01 & 2.17E+00 \\
	
	4 & 451	& 4	& 6.22E+00	& 34 & 1 & 2.83E-01 & 6.51E+00 \\
	
	5 & 708	& 5	& 1.63E+01	& 30 & 1 & 4.23E-01 & 1.67E+01 \\
	
	6 & 667	& 5	& 2.37E+01	& 33 & 1 & 6.68E-01 & 2.43E+01 \\
	
	7 & 868	& 6	& 4.85E+01	& 33 & 1 & 1.04E+00 & 4.95E+01 \\ 
     
     \hline
    \hline

     \multicolumn{8}{c}{$\beta_B=10^{-7}$ Pa$^{-1}$}\\
      \hline
      \hline   
         
	1 &	262	& 3	& 6.07E-01	& 30 & 1 & 8.82E-02 & 6.95E-01 \\

	2 &	267	& 3	& 9.97E-01	& 31 & 1 & 1.19E-01 & 1.12E+00 \\

	3 &	439	& 4	& 2.87E+00	& 31 & 1 & 1.69E-01 & 3.04E+00 \\

	4 &	462	& 4	& 6.48E+00	& 34 & 1 & 2.97E-01 & 6.77E+00 \\

	5 &	716	& 5	& 1.69E+01	& 31 & 1 & 4.52E-01 & 1.73E+01 \\

	6 &	681	& 5	& 2.46E+01	& 32 & 1 & 6.36E-01 & 2.52E+01 \\

	7 &	847	& 5 & 4.85E+01 & 33 & 1 & 1.03E+00 & 4.95E+01 \\
         
      \hline
      \hline

     \multicolumn{8}{c}{$\beta_B=10^{-6}$ Pa$^{-1}$}\\
      \hline
      \hline

	1 & 438	& 5	& 9.02E-01	& 30 & 1 & 8.82E-02 & 9.90E-01  \\

	2 &	431	& 5	& 1.65E+00	& 31 & 1 & 1.24E-01 & 1.77E+00 \\

	3 &	563	& 5	& 3.52E+00	& 30 & 1 & 1.56E-01 & 3.68E+00 \\

	4 &	546	& 5	& 7.47E+00	& 34 & 1 & 3.05E-01 & 7.77E+00  \\

	5 &	496	& 4	& 1.17E+01	& 31 & 1 & 4.43E-01 & 1.22E+01 \\

	6 &	507	& 4	& 1.84E+01	& 32 & 1 & 6.67E-01 & 1.90E+01 \\

	7 &	707	& 5 & 3.96E+01 & 34 & 1 & 1.01E+00 & 4.06E+01 \\

      \hline
      \hline      
  \end{tabular}
  \label{table:RT0_Iter}
\end{table} 
\begin{table}[t]
\captionsetup{format=hang}
\centering
\caption{Circular reservoir: solver iterations and time-to-solution for VMS formulation.}
  \begin{tabular}{ c |c c c| c c c | c }
    \hline
    \hline
    
     \multirow{1}{*}{Mesh} &
       \multicolumn{3}{c|}{VMS} &
       \multicolumn{3}{c|}{VI over VMS} &
     \multirow{1}{*}{Total} \\
       
     ID & KSP & SNES & Time & KSP & SNES & Time & time \\
    \hline
    \hline
    \multicolumn{8}{c}{$\beta_B=10^{-8}$ Pa$^{-1}$}\\
    
        \hline
        \hline   
	1 &	123	& 3	& 6.17E-01	& 37 & 2 & 1.82E-01 & 7.99E-01  \\  
	2 &	170	& 3	& 1.36E+00	& 48 & 2 & 3.10E-01 & 1.67E+00  \\

	3 &	191	& 3	& 2.88E+00	& 52 & 2 & 4.96E-01 & 3.38E+00 \\

	4 &	343	& 4	& 1.08E+01	& 72 & 2 & 1.02E+00 & 1.18E+01 \\

	5 &	417	& 4	& 2.51E+01	& 95 & 2 & 1.99E+00 & 2.71E+01 \\

	6 &	551	& 5	& 5.50E+01	& 114& 2 & 3.53E+00 & 5.85E+01 \\

	7 &	613	& 5	& 9.58E+01	& 133& 2 & 6.13E+00 & 1.02E+02 \\

     \hline
    \hline

      \multicolumn{8}{c}{$\beta_B=10^{-7}$ Pa$^{-1}$}\\
      \hline
      \hline   
         
	1 &	209	& 4	& 9.35E-01	& 35 & 2 & 1.78E-01 & 1.11E+00 \\
	
	2 &	234	& 4	& 1.80E+00	& 47 & 2 & 3.04E-01 & 2.10E+00 \\

	3 &	271	& 4	& 3.90E+00	& 50 & 2 & 4.96E-01 & 4.39E+00 \\
	
	4 &	386	& 4	& 1.22E+01	& 72 & 2 & 1.03E+00 & 1.32E+01 \\
	
	5 &	375	& 4	& 2.28E+01	& 95 & 2 & 2.03E+00 & 2.48E+01 \\
	
	6 &	567	& 5	& 5.35E+01	& 104& 2 & 3.25E+00 & 5.67E+01 \\
	
	7 &	574	& 5	& 9.03E+01	& 149& 2 & 6.91E+00 & 9.72E+01 \\

      \hline
      \hline

     \multicolumn{8}{c}{$\beta_B=10^{-6}$ Pa$^{-1}$} \\
      \hline
      \hline   
         
	1 &	209	& 5	& 9.69E-01	& 35 & 2 & 1.91E-01 & 1.16E+00 \\
	
	2 &	317	& 6	& 2.78E+00	& 46 & 2 & 3.20E-01 & 3.10E+00 \\
	
	3 &	368	& 5	& 5.23E+00	& 49 & 2 & 4.52E-01 & 5.68E+00 \\
	
	4 &	426	& 5	& 1.39E+01	& 65 & 2 & 9.58E-01 & 1.48E+01 \\
	
	5 &	554	& 5	& 3.46E+01	& 77 & 2 & 1.85E+00 & 3.65E+01 \\
	
	6 &	549	& 5	& 5.17E+01	& 98 & 2 & 3.07E+00 & 5.47E+01 \\
	
	7 &	602	& 5 &  9.49E+01  & 124 & 2 & 5.78E+00 & 1.01E+02 \\        
      \hline
  \end{tabular}
  \label{table:VMS_Iter}
\end{table}  
Solver performances are shown in Tables \ref{table:RT0_Iter} and \ref{table:VMS_Iter} for 
RT0 and VMS discretizations, respectively. First, it can be observed that as 
$\beta_B$ increases, the computational cost for both the VI and non-VI approaches 
also increases. Second, like the square reservoir problem, it can also be observed that
the increase in computational cost introduced by the VI solver is not significant; 
the total time-to-solution only increases by at most 15 percent for RT0 and 30 percent
for VMS. The number of SNES iterations required for the VI solvers remains 2 and 1 for 
RT0 and VMS, respectively, regardless of the mesh ID or $\beta_B$. However, the number 
of KSP iterations associated with VI over VMS increases with finer meshes.
\begin{figure}[t!]
		\centering
		\subfigure[RT0 and VMS only]{\includegraphics[scale=0.45]{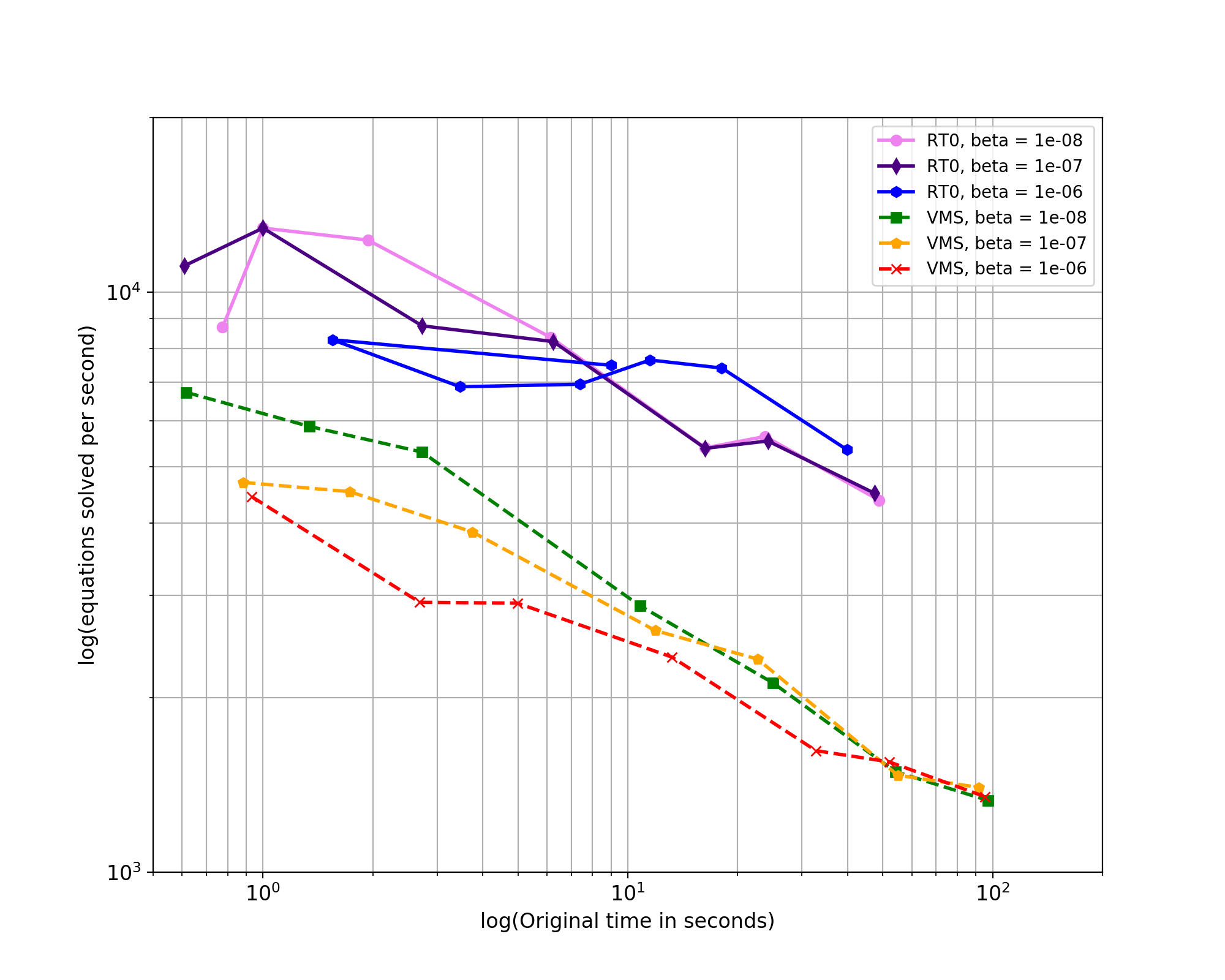}}\\
		\subfigure[VI only]{\includegraphics[scale=0.45]{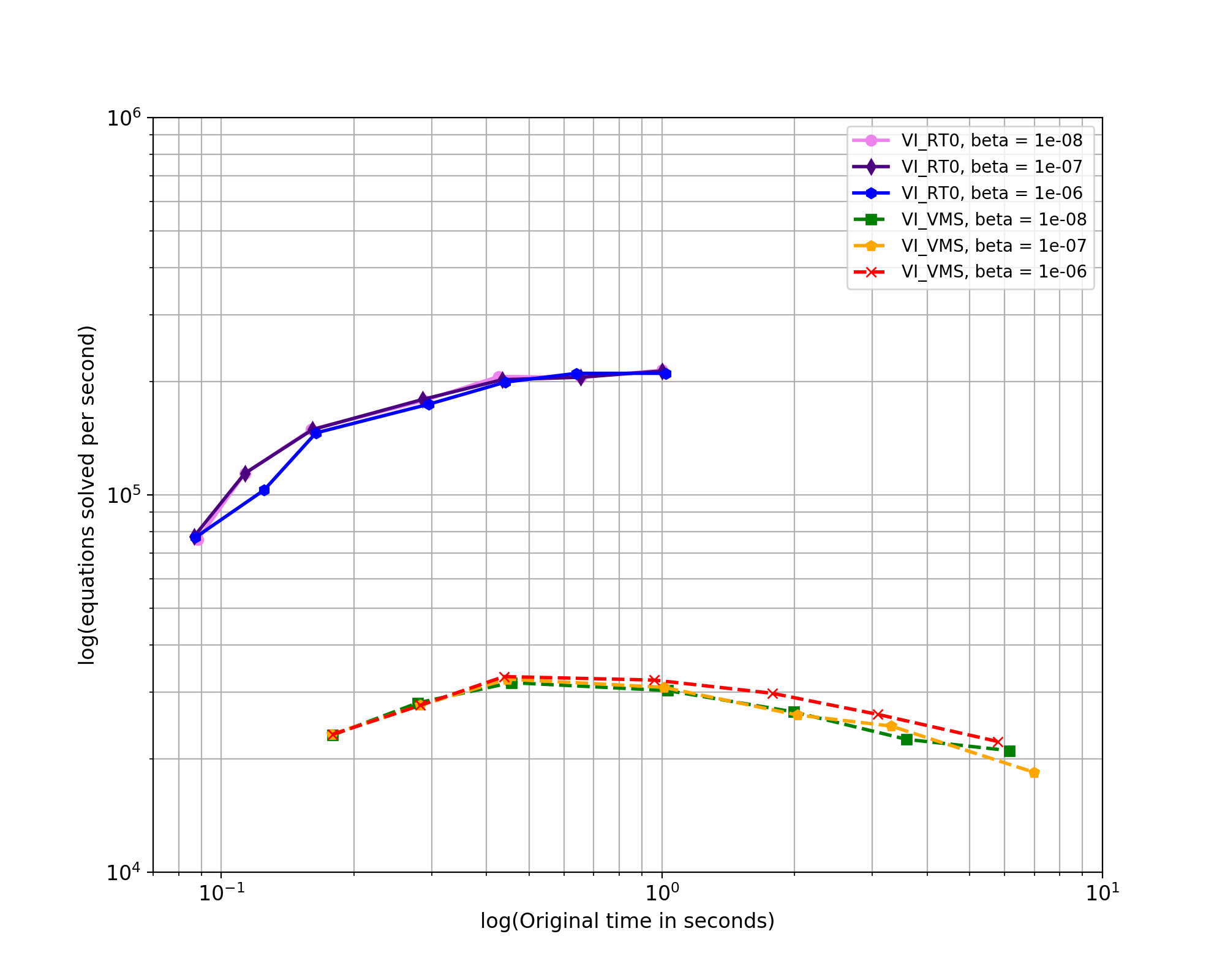}}
\captionsetup{format=hang}
	\caption{Static-scaling over VI and non-VI components of the framework 
        for different $\beta_B$ values and discretization. The flat horizontal lines 
        found in the VI only plots demonstrate
        excellent algorithmic scalability in comparison to the RT0 and VMS only
        components.}
	\label{fig:static_scaling_components}
\end{figure}
\begin{figure}[t!]
		\centering
		\subfigure[Rate-metric vs Total time taken]{\includegraphics[scale=0.45]{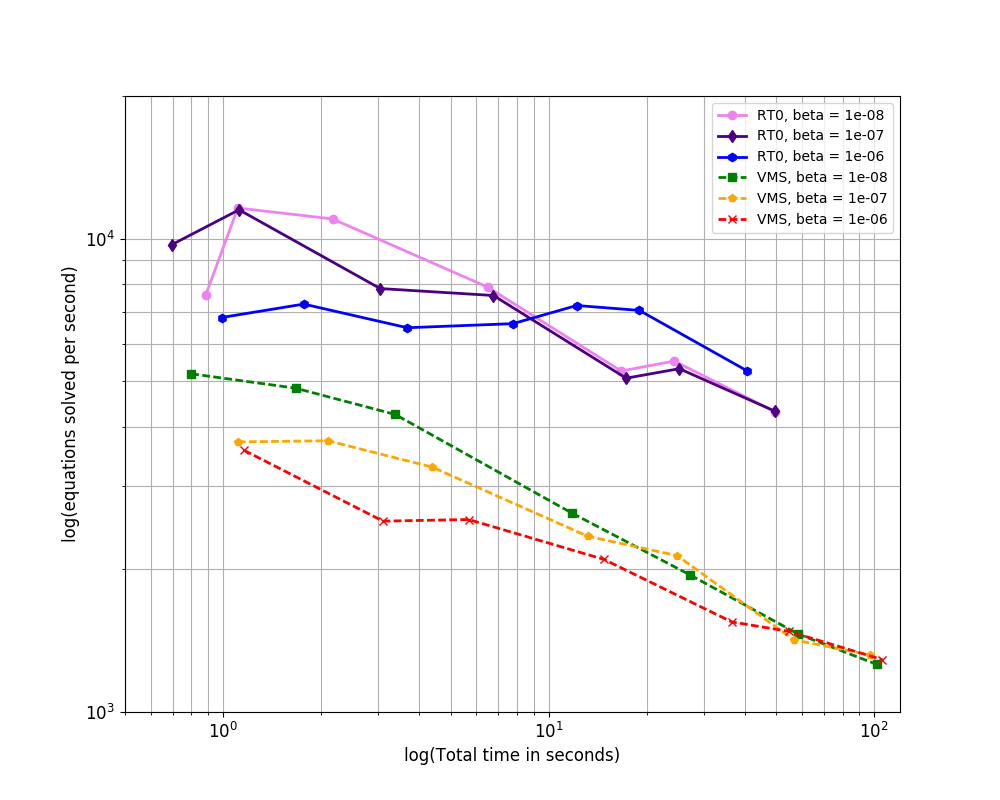}}\\
		\subfigure[Rate-metric vs Total no.of degrees of freedom solved]{\includegraphics[scale=0.45]{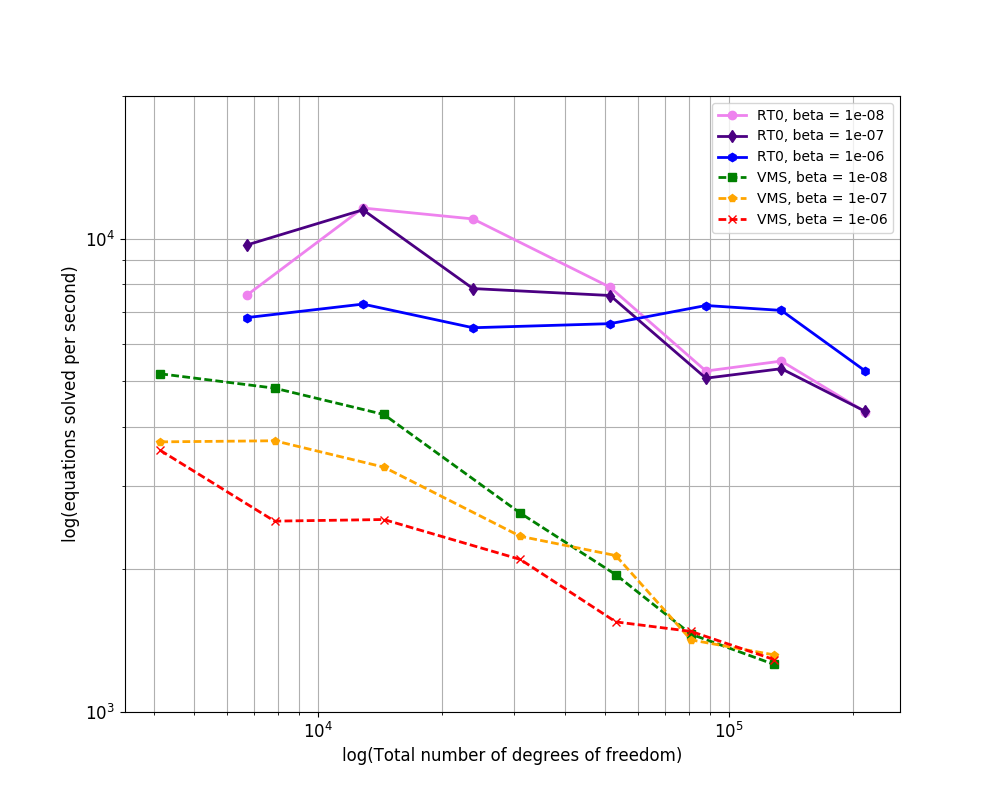}}
\captionsetup{format=hang}
	\caption{Static-scaling over total time for different $\beta_B$ values. The RT0
        formulation not only has better algorithmic convergence but is capable of
        solving more degrees-of-freedom per second.}
	\label{fig:static_scaling_overall}
\end{figure}
In order to understand algorithmic scalability, a rate metric of some sort is 
needed to understand the performance of the SNES and KSP solvers. Herein, 
\emph{static-scaling} plots as described in \citep{chang2017performance} shall be used
for this purpose. Static-scaling plots document the degrees-of-freedom solved per second
across all mesh IDs. The degrees-of-freedom solved per second for each phase of
the computational frameworks are shown in Figure \ref{fig:static_scaling_components}. 
Although the initial guess components of the RT0 and VMS solvers demonstrate suboptimal
convergence, their respective VI components have excellent algorithmic scaling. 
Figure \ref{fig:static_scaling_overall} combines the static-scaling plots of the overall computational
effort for both mixed formulations. It is well-known that RT0 has more degrees of freedom (DOF) 
than VMS discretization for any given mesh, but it can be seen that the rate metrics are 
higher for RT0 when compared with VMS. This suggests that for the same problem size, the 
RT0 formulation is in fact more efficient for the proposed VI framework. 
Furthermore, the rate metric for the VMS formulation decreases significantly as the 
problem size increases. These findings are consistent with the fact that the KSP 
iteration count increases for VI over VMS as the mesh is refined. Even though 
RT0 has a slower theoretical convergence rate than VMS as seen from Figure \ref{fig:h_conv},
it is in fact more computationally efficient because it solves more 
degrees-of-freedom per second and is also more algorithmically 
efficient because the tailing off towards the right is not as significant..
\FloatBarrier
\subsection{3D reservoir problem}

\begin{figure}[hbt]
  \centering
  \captionsetup{format=hang}
  \subfigure{\includegraphics[scale =0.7]{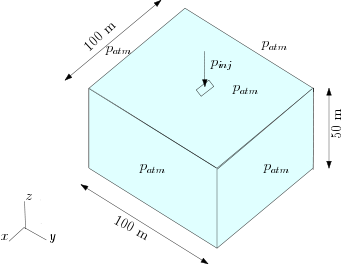}}
  \subfigure{\includegraphics[scale=0.4]{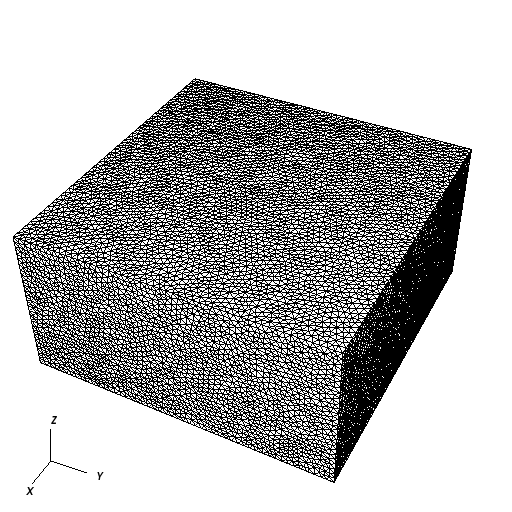}}
  \captionsetup{format=hang}
  \caption{3D reservoir problem:~The left figure provides
    a pictorial description of the problem, and the right
    figure shows the corresponding unstructured finite
    element mesh using tetrahedral elements.
    \label{fig:3D_res_q}}
\end{figure}
For this last problem, we consider a 3D reservoir which 
is larger in size than the previous problems and study 
the parallel scalability of the VI framework. Consider 
a three dimensional cuboid with dimensions $100 \times 
100 \times 50 \; \mathrm{m}$. The outer surfaces are 
maintained at atmospheric pressure ($p_{atm} = 1$ atm). 
An injection pressure ($p_{inj}$) of 
\begin{equation}
  p_{inj} =  1 + 10\times 
  \sin\left(\pi\frac{(x-48)}{4}\right)
  \sin\left(\pi\frac{(y-48)}{4}\right) \;\mathrm{atm},
\end{equation}
is applied over the square region $ [48,52] 
\times [48,52]$ on the top surface, see Figure 
\ref{fig:3D_res_q} for a pictorial description 
of this problem. We take $\rho\mathbf{b} = 
\mathbf{0}$, $f=0$, $\mu = 10^{-3}$ Pa$\cdot$s, 
$\beta_B = 10^{-8}$ Pa$^{-1}$ and the permeability 
tensor to be as follows: 
\begin{align}
\mathbf{K} = \begin{pmatrix}
10^{-13} & 0 & 0\\
0 & 10^{-13} & 0 \\
0 & 0 & 10^{-11}
\end{pmatrix}\;\mathrm{m}^2.
\end{align}
The 3D domain is discretized using only the RT0 formulation 
resulting in 549,023 velocity degrees-of-freedom, 267,869 
pressure degrees-of-freedom, and hence 816,892 total number 
of degrees-of-freedom. Strong-scaling is conducted up to 64 
cores on a KNL processor.

\begin{figure}[t]
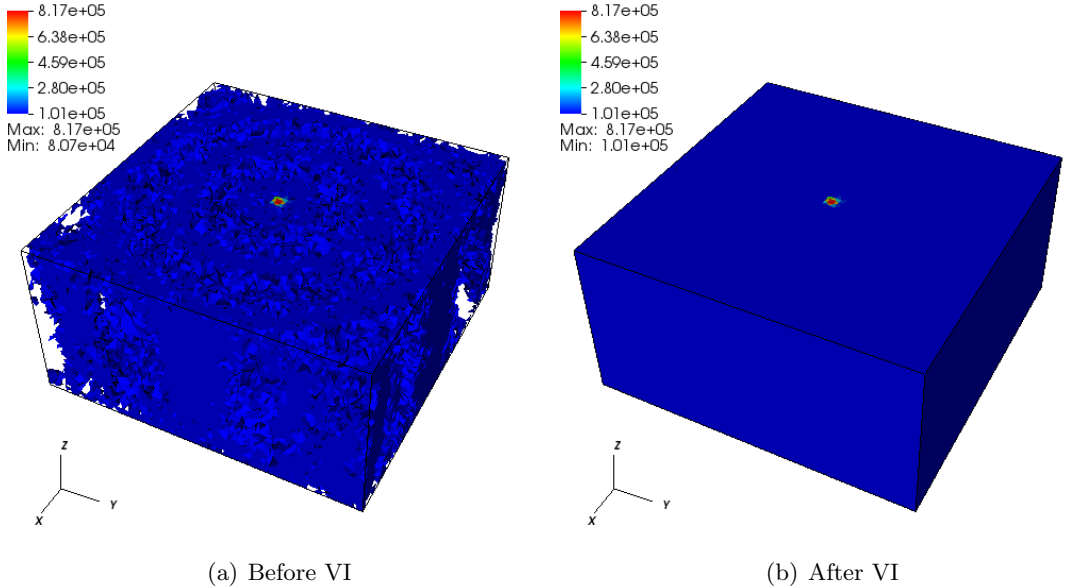

  \centering
  \subfigure[Before VI]{\includegraphics[scale=0.4]{Figures/3D_Orig.png}}
  \subfigure[After VI]{\includegraphics[scale=0.4]{Figures/3D_VI.png}}
  \captionsetup{format=hang}
  \caption{3D reservoir problem:~Pressure profiles of RT0 and RT0 + VI formulations. The clipped regions denote DMP violations.}
  \label{fig:3D_mesh_pres}
\end{figure}

\begin{table}[t]
\captionsetup{format=hang}
\centering
\caption{3D reservoir problem:~Strong-scaling with respect to one core.}
  \begin{tabular}{ c |c c c| c c c }
    \hline
    \hline 
     \multirow{1}{*}{No.of} &
       \multicolumn{3}{c}{Time} &
       \multicolumn{3}{c}{Parallel efficiency ($eff_n$)} \\       
    Cores & RT0  & RT0 + VI & Total  & RT0  & RT0 + VI & Total  \\
    \hline 
	\hline

	1 &	4.09E+002 &	2.73E+002 &	6.82E+002 &	100.00 & 100.00	& 100.00	\\
	2 &	1.97E+002 & 1.78E+002 &	3.75E+002 &	103.83 & 76.44 & 90.82\\
	4 &	1.08E+002 &	9.63E+001 &	2.04E+002 &	94.48 &	70.78 &	83.32 \\
	8 &	6.39E+001 &	5.67E+001 &	1.21E+002 &	79.98 &	60.07 &	70.61 \\
	16 & 3.80E+001 & 3.24E+001 & 7.04E+001 & 67.32 & 52.54 & 60.51 \\
	32 & 2.42E+001 & 2.39E+001 & 4.81E+001 & 52.74 & 35.63 & 44.24 \\
	64 & 1.65E+001 & 1.42E+001 & 3.07E+001 & 38.70 & 29.93 & 34.64	\\
      \hline     
  \end{tabular}
  \label{table:core_vs_time}
\end{table}  
\begin{table}[t]
\captionsetup{format=hang}
\centering
\caption{3D reservoir problem:~Strong-scaling with respect to two cores.}
  \begin{tabular}{ c |c c c| c c c }
    \hline
    \hline 
     \multirow{1}{*}{No.of} &
       \multicolumn{3}{c}{Time} &
       \multicolumn{3}{c}{Parallel efficiency ($\%$)} \\       
    Cores & RT0  & RT0 + VI & Total  & RT0  & RT0 + VI & Total  \\
    \hline 
	\hline

	2 &	1.97E+002 & 1.78E+002 &	3.75E+002 &	100.00 & 100.00 & 100.00\\
	4 &	1.08E+002 &	9.63E+001 &	2.04E+002 &	91.20 &	92.42 &	91.91 \\
	8 &	6.39E+001 &	5.67E+001 &	1.21E+002 &	77.07 &	78.48 &	77.48 \\
	16 & 3.80E+001 & 3.24E+001 & 7.04E+001 & 64.80 & 68.54 & 66.58 \\
	32 & 2.42E+001 & 2.39E+001 & 4.81E+001 & 50.87 & 46.55 & 48.73 \\
	64 & 1.65E+001 & 1.42E+001 & 3.07E+001 & 37.31 & 39.17 & 38.17	\\
      \hline     
  \end{tabular}
  \label{table:core_vs_time2}
\end{table} 
\begin{table}[t!]
 \captionsetup{format=hang}
 \centering
 \caption{3D reservoir problem:~Change in number of iterations for different number of cores.}
   \begin{tabular}{ c|c c |c c }
     \hline
     \hline 
     \multirow{1}{*}{No.of} &
     \multicolumn{2}{c}{ RT0} &
     \multicolumn{2}{c}{ RT0 + VI} \\

     Cores & KSP  & SNES & KSP  & SNES \\
	 \hline
	        
	1	&	237	&	2	&	219	&	3	\\
	2	&	211	&	2	&	279	&	3	\\
	4	&	226	&	2	&	289	&	3	\\
	8	&	240	&	2	&	301	&	3	\\
	16	&	251	&	2	&	299	&	3	\\
	32	&	284	&	2	&	404	&	3	\\
	64	&	277	&	2	&	411	&	3	\\

      \hline    
   \end{tabular}
   \label{table:core_vs_iter}
 \end{table}

Figure \ref{fig:3D_mesh_pres} shows the RT0 pressures before and after VI is imposed. 
The computational framework successfully eliminates all DMP violations even for a larger
3D problem. In Table \ref{table:core_vs_time}, it can be seen that the parallel 
scalability of the RT0 + VI combination is somewhat similar to the RT0 only framework.
A noticeable deterioration in parallel efficiency is noticed when the VI framework jumps
from 1 core  (i.e., serial) to 2 cores (i.e., parallel). However, if strong-scaling 
was conducted from 2 cores and on, the parallel scalability is nearly identical as seen
from Table \ref{table:core_vs_time2}. The added computational cost associated with 
enforcing the bound constraints is slightly larger than observed from the last 
two 2D reservoir problems because the total time is now nearly doubled. Nevertheless,
the SS method utilized in this paper is comparatively less expensive than the RS 
framework used in \citep{chang2017variational}, and it is possible that different 
numerical discretizations may tell a different story. Lastly, Table 
\ref{table:core_vs_iter} depicts the number of KSP and SNES iterations 
required for different core counts, and the numbers
remain relatively consistent. Thus, we can conclude based off all these 
computational results of this problem that the VI approach for the modified Darcy
model with pressure-dependent viscosity has 
comparable parallel scalability for particular mixed formulations like the 
RT0 formulation. 
\FloatBarrier

\section{CONCLUDING REMARKS}
\label{Sec:S6_NN_CR} 
The VI-based formulation proposed in this
paper is a comprehensive framework that
enforces maximum principles for flow
through porous media models which
account for pressure-dependent
viscosity and anisotropy.
%
Some of the \emph{salient features} of the
proposed formulation are as follows:
\begin{enumerate}[(S1)]
\item To the best of our knowledge, this is the only
  computational framework that can enforce DMP even
  for anisotropic and nonlinear flow through porous
  media models.
\item The proposed VI based framework works on
  any mixed finite element weak formulation, as
  demonstrated through using the RT0 and VMS
  formulations. The underlying weak form can
  be non-symmetric and non-linear. 
\item The formulation allows the user to place
  desired bounds on the field variables like
  the maximum principle on the pressure field.
\item The formulation is amenable for an
  implementation in a parallel environment. 
\end{enumerate}

The \emph{main findings} of our study
are summarized as follows: 
\begin{enumerate}[(C1)]
\item The convergence study (reported in
  subsection \ref{sec:h_conv}) for the Firedrake implementation
  of the modified Darcy equation with pressure-dependent viscosity 
  indicates that the computational framework has an $L_2$ error
  convergence rate that matches the theoretical convergence rate.
\item  It is shown that the extent of anisotropy and heterogeneity has a 
  direct impact on the percentage of DMP violations. These violations 
  tend to decrease when anisotropy is decreased.
\item The study also infers that mesh refinement
  does not reduce the percentage of DMP violations
\item Our study also shows that the degree of
  nonlinearity (i.e., extent of viscosity
  dependence on pressure), though impacting
  solver performance, has no significant
  influence on the percentage of DMP violations.
\item The number of KSP and SNES iterations do not vary much when either problem size
or number of cores increases suggesting that the VI framework is algorithmically scalable.
\item The static-scaling plots (presented in 
  subsection \ref{sec:s_s}) reveals that the
  the VI component of the computational framework is
  much more scalable in the algorithmic sense than the 
  standard Newton solvers used for computing the initial 
  RT0 and VMS guesses. 
\item Furthermore, the degrees-of-freedom solved per second
  for certain formulations (e.g., VMS) decreases 
  as the problem size increases. It also sheds light on the 
  fact that for a given mesh, the RT0 discretization has better 
  static-scaling than that of the VMS discretization, reinstating
  that this static-scaling study can be used as a
  reference guide to compare not only numerical accuracy but
  also computational costs of various discretization.
\item We have shown that the parallel performance for
  the VI framework is comparable to the standard Newton
  approach for solving standard nonlinear equations.   
\end{enumerate}

A possible future work can be towards developing
a VI-based framework for multi-phase flows through
porous media that respects maximum principles on
general computational grids.

\bibliographystyle{plainnat}
\bibliography{References}
\end{document}